\newcommand{\K}{K}
\date{\today}
\newtheoremstyle{myplain}%
  {6pt}{6pt}{\itshape}{}{\bfseries}{.}{.5em}{}
\newtheoremstyle{mydefinition}%
  {6pt}{6pt}{\normalfont}{}{\bfseries}{.}{.5em}{}
\newtheoremstyle{myremark}%
  {6pt}{6pt}{\normalfont}{}{\itshape}{.}{.5em}{}
\theoremstyle{myplain}
\newtheorem{theorem}{Theorem}[section]
\newtheorem{lemma}{Lemma}[section]
\theoremstyle{mydefinition}
\newtheorem{definition}{Definition}[section]
\theoremstyle{myremark}
\newenvironment{plainbox}{\begin{tcolorbox}}{\end{tcolorbox}}
\title{The Algorithmic Regulator }
\author{Giulio Ruffini\footnote{giulio.ruffini@bcom.one, giulio.ruffini@neuroelectrics.com}}
\date{Oct 14, 2025}
\begin{document}

\maketitle

\begin{abstract}
The regulator theorem states that, under certain conditions, any optimal controller must embody a model of the system it regulates, grounding the idea that controllers embed, explicitly or implicitly, internal models of the controlled. This principle underpins neuroscience and predictive brain theories like the Free-Energy Principle or  Kolmogorov/Algorithmic Agent theory.  However, the theorem is only proven in limited settings.  
Here, we treat the deterministic, closed, coupled world-regulator system ($W$,$R$) as a single self-delimiting program $p$ via a constant-size wrapper that produces the world output string~$x$ fed to the regulator. We analyze regulation from the viewpoint of the algorithmic complexity of the output, $K(x)$ (regulation as compression).
We define \(R\) to be a \emph{good algorithmic regulator}  if it  \emph{reduces} the algorithmic complexity of the readout relative to a null (unregulated) baseline $\varnothing$, i.e., 
$
\Delta =K\big(O_{W,\varnothing}\big)-K\big(O_{W,R}\big)>0.
$
We then prove that the larger \(\Delta\) is, the more world-regulator pairs with high mutual algorithmic information are favored.  More precisely, a complexity gap \(\Delta>0\) yields
\(\Pr((W,R)\mid x)\le C\,2^{\,M(W{:}R)}\,2^{-\Delta}\), making low \(M(W{:}R)\) exponentially unlikely as \(\Delta\) grows.  
This is an AIT version of the idea that “the regulator contains a model of the world.” The framework is distribution‑free, applies to individual sequences, and complements the Internal Model Principle.
Beyond this necessity claim, the same coding‑theorem calculus singles out a \emph{canonical scalar objective} and implicates a \textit{planner}. On the realized episode, a regulator behaves \emph{as if} it minimized the conditional description length of the readout.  
\end{abstract}

 \clearpage\tableofcontents \clearpage

\section{Introduction}
\label{sec:intro}

In the Kolmogorov Theory (KT) of consciousness, an \emph{algorithmic agent} is a system that maintains (tele)homeostasis (persistence of self or kind) by learning and running succinct generative models of its world coupled to an objective function and a action planner \cite{ruffiniAlgorithmicInformationTheory2017,ruffiniAITFoundationsStructured2022,ruffiniStructuredDynamicsAlgorithmic2025}. Closely related, \emph{Active Inference (AIF)} models biological agents as minimizing variational free energy under a generative model \cite{karlFreeEnergyPrinciple2012,parr2022active}. These frameworks suggest that “agents with world‑modeling engines, objective functions, and planners” are natural minimal models of homeostasis (goal‑conditioned setpoint control). But for the kinds of homeostatic systems we actually encounter in nature (cells, organisms, engineered servos), how can we tell—operationally—whether they are algorithmic agents in this sense? 

The classical cybernetics statement that ``every good regulator of a system must be a model of that system'' originates with Conant and Ashby’s 1970 paper (the \emph{Good Regulator Theorem}, GRT) \cite{ConantAshby1970}.  While influential, the GRT has been criticized for the looseness of its definitions of ``model'' and ``goodness'', and for a proof that does not clearly deliver the headline claim~ \cite{BaezGRT}.  In modern control theory, the rigorous statement that fills a similar conceptual niche is the \emph{Internal Model Principle} (IMP): under appropriate hypotheses, perfect regulation or disturbance rejection for a given signal class requires that the controller embed a dynamical copy of the signal generator \cite{FrancisWonham1975AMO,FrancisWonham1976Automatica,Sontag2003}.  The IMP is precise (and falsifiable) within its scope, and is now a standard backbone for robust control; see \cite{BinEtAl2022-AR} for a contemporary review across control, bioengineering, and neuroscience.
However, the classical IMP is a linear result: for finite-dimensional LTI plants ( linear, \emph{time-invariant} meaning the system matrices do not change with time) and exogenous signals generated by a finite-dimensional LTI exosystem, robust asymptotic tracking/disturbance rejection requires that the controller embed a copy of the exosystem dynamics  \cite{FrancisWonham1976}. For nonlinear systems, the appropriate generalization is the nonlinear output-regulation framework: if the regulator equations admit smooth solutions and the plant's zero dynamics on the regulated manifold are (locally) stable, together with suitable immersion/detectability assumptions, then one can construct dynamic output-feedback regulators that embed a (possibly adaptive) internal model and achieve local or semiglobal robust regulation \cite{Isidori45168,huangNonlinearOutputRegulation2004,priscoliAdaptiveObserversNonlinear2006}. However, absent these structural hypotheses, a complete nonlinear analogue of the IMP with the same necessity/robustness guarantees as in the LTI case is not generally available. Table~\ref{tab:grt-imp-agrt} provides a comparison of the different regulator theorem statements, which can be compared with the one presented here.

In this paper, we recast the modeling requirement in a setting independent of linearity, probability, exact regulation or specific signal classes, by using \emph{algorithmic information theory} (AIT).  We model a world \(W\) and a regulator \(R\) as deterministic causal Turing machines that interact over interface tapes.  We denote the world output by \(x=O_W\) (over some temporal horizon of length $N$).      
Our main technical claim is that regulation in the algorithmic sense, i.e., simplicity, \emph{forces} algorithmic dependence between \(W\) and~\(R\).  


\subsubsection*{Definition of model}
A  \emph{model} in the present context is a program capable of compressing (or generating) data. Similarly, “the regulator contains a model of the world” is interpreted in an \emph{algorithmic‑information} sense: the regulator $R$ carries nontrivial information about $W$, quantified by positive mutual algorithmic information $M(W{:}R)>0$ (up to the standard $O(\log)$ slack). Equivalently, knowing $R$ makes the shortest description of $W$ strictly shorter, $K(W\mid R)<K(W)$. This notion does \emph{not} require $R$ to embed a dynamical copy of $W$; rather, it formalizes “model content” as mutual algorithmic information. 

We formalize this  with the following definition:
\begin{definition}[Algorithmic “internal model”]
Given a fixed horizon $N$ (implicitly conditioned), we say that $R$ \emph{contains an internal model of} $W$ \emph{in the algorithmic sense} if $M(W{:}R)>0$ (up to $O(\log)$), equivalently $K(W\mid R)<K(W)$. The magnitude of $M(W{:}R)$ quantifies the amount of computable structure in $W$ that $R$ carries.
\end{definition}
The definition ground on mutual algorithmic information $M$  is further motivated by the following:
(i) \textit{Machine invariance:} $M$ is invariant up to $O(1)$ under changes of universal machine. (ii) \textit{Distribution‑free:} $M$ is defined for individual objects (programs), not probabilistic models. (iii) \textit{Operational meaning:} $M(W{:}R)$ is precisely the codelength reduction in describing $W$ when $R$ is known, aligning with MDL/Occam reasoning via the Coding Theorem  \cite{Solomonoff64a,ZvonkinLevin70,LiVitanyi4th}. 


This is the appropriate lens for our contrastive results, and it complements the Internal Model Principle, where “model” means a dynamical replica of the Exosystem (a part of the World in our framework, see Figure~\ref{fig:wr-imp-yourlayout}) under stated structural hypotheses \cite{FrancisWonham1976,Isidori45168,huangNonlinearOutputRegulation2004,priscoliAdaptiveObserversNonlinear2006}. Conceptually, our AIT result is complementary to the IMP: whereas the IMP states \emph{what} structural content must be present in a controller to achieve perfect regulation for a given signal class \cite{FrancisWonham1975AMO,FrancisWonham1976Automatica,Sontag2003}, our results quantify \emph{how much} algorithmic information the regulator must carry about the world whenever it succeeds in making the measured outcome compressible.

\subsubsection*{Regulation as compression}
We score \emph{regulation} by how \emph{compressible} a task‑weighted error stream is. Let $x_t$ be the weighted error and $x_{1:T}$ the $T$‑sample string. Fix a prefix‑free lossless code (e.g., a universal compressor) and define the per‑sample codelength $L_T := \tfrac{1}{T} L_C(x_{1:T})$. A regulator $R$ is \emph{better} on horizon $T$ when it makes $L_T$ smaller than a null baseline $\varnothing$, i.e., when the \emph{contrastive gap} $\Delta := L_T(x;\varnothing) - L_T(x;R)$ is positive. 
This choice is natural: for stationary ergodic data, normalized universal codelengths converge (a.s.\footnote{``Almost surely'': with probability $1$.}) to the Shannon entropy rate $h(x)$, and (under standard computability assumptions) $K(x_{1:T})/T = h(x) + o(1)$ almost surely; thus the Kolmogorov‑based criterion reduces to the Shannon criterion when those stochastic assumptions hold—while remaining meaningful outside them \cite{ZivLempel1977,CoverThomas2006,LiVitanyi4th}.

To see the connection between regulation and compression in more detail, let $h_{x_{1:T}}(\alpha):=\min_{S\ni x_{1:T},\,K(S)\le \alpha}\log|S|$ denote the Kolmogorov \emph{structure function} \cite{VereshchaginVitanyi2004}. Regulation amounts to moving \emph{down} this curve: as the regulator invests model bits (larger $\alpha$), more regularity in $x_{1:T}$ is captured and the residual randomness $h_{x_{1:T}}(\alpha)$ drops, approaching $0$ at perfect regulation.
 The notion of robability emerges along this path. Replacing set‑models by probabilistic models $\{P_M\}$ turns the two‑part description into the standard MDL form
\[
L(x_{1:T};M)\;\approx\;K(M)\;-\!\sum_{t=1}^T \log P_M(x_t),
\]
where the second term is the ideal codelength under $P_M$ (Shannon coding: $-\log P_M$) \cite{CoverThomas2006,LiVitanyi4th,Grunwald2007}. If the regulator must hedge over multiple models, the \emph{mixture/Bayes code} with prior $\pi$ uses $\bar P(x_{1:T})=\sum_M \pi(M) P_M(x_{1:T})$ and assigns
\[
L_{\rm mix}(x_{1:T}) \;=\; -\log \bar P(x_{1:T}) \;=\; -\log\!\sum_M \pi(M)P_M(x_{1:T}),
\]
a valid prefix code whose regret relative to the best single model $M^\star$ is bounded by $-\log \pi(M^\star)$; with $\pi(M)\!\propto\!2^{-K(M)}$ (Solomonoff/Occam), the penalty matches the model description length \cite{BarronRissanenYu1998,Grunwald2007,solomonoffFormalTheoryInductive1964a,Solomonoff64a,LiVitanyi4th}. Thus, \emph{probabilistic/Bayesian regulation is the coding‑optimal way to descend $h_{x_{1:T}}(\alpha)$}, aligning with the multi‑model argument in \cite{Ruffini2024Emergence}.

Finally, we can treat the regulator input as an error signal quantized at fixed sensor resolution; the per‑sample codelength of $x_{1:T}$ under a universal compressor converges to the entropy rate for stationary sources. For Gaussian processes,
\[
h(x)\;=\;\frac{1}{4\pi}\!\int_{-\pi}^{\pi}\log\!\big(2\pi e\,S_{xx}(\omega)\big)\,d\omega,
\]
where \(S_{xx}\) is the input power spectral density~\cite{ZivLempel1977,Gray2011} of the error signal $x$, so attenuating in‑band sensitivity (reducing $S_{xx}$ where it matters) \emph{reduces} codelength \cite{Gray2011}. In the scalar white‑Gaussian case with variance $\sigma^2$, $h=\tfrac{1}{2}\log(2\pi e\,\sigma^2)$, so smaller‑amplitude fluctuations (smaller $\sigma$) mean lower entropy and better compressibility. In short, ``compressible error'' matches the classical view: good regulation removes variability/uncertainty in the task band and accords with the IMP \cite{AstromMurray2008,FrancisWonham1976}.

In the next sections, we first provide an overview of the AIT setting and of the results, followed by the analysis of the single episode scenario. The next section provides a formal definition of the algorithmic regulator and the corresponding theorem.

\section{Setting}
 
 
Unless stated otherwise, $U$ is the standard \emph{three–tape} universal \emph{prefix} Turing machine: 
a read‑only input tape holding a self‑delimiting program $p$, a work tape (private scratch memory), 
and a write‑only output tape. When we write $U(p)=x$ we mean that, upon halting, the contents of the 
\emph{output} tape equal $x$; the work tape is never part of the scored output. The domain of halting 
programs is prefix‑free, so Kraft–McMillan applies and the universal a priori semimeasure 
$m(x)=\sum_{U(p)=x}2^{-|p|}$ is well defined. By the invariance theorem, replacing $U$ by any other 
universal prefix machine (single‑ or multi‑tape) changes all complexities only by an additive $O(1)$; 
all Coding‑Theorem statements we use depend only on prefix‑freeness and therefore remain valid up to 
these constants (see, e.g., \cite{LiVitanyi4th}).

The (prefix) \textit{Kolmogorov complexity }of $x$ is the length of its shortest description,
\[
\K(x)\ :=\ \min\{\,|p|:\ U(p)=x\,\}.
\]
Intuitively, $\K(x)$ is the best achievable compressed size of $x$ on $U$. If $\K(x)\ll |x|$, then $x$ has a short generative regularity; if $\K(x)\approx |x|$, $x$ is (algorithmically) random. By the invariance theorem, $\K$ is machine‑independent up to an additive constant $O(1)$ \cite{LiVitanyi4th}. A fundamental limitation is that $\K(\cdot)$ is \emph{not computable}: no algorithm can output $\K(x)$ for all $x$ \cite{chaitinTheoryProgramSize1975,LiVitanyi4th}. However, algorithms for upper bounds of $K(x)$ exist, as we discuss below.

Given auxiliary data $y$ on a read‑only \emph{auxiliary} tape, the \textit{conditional complexity}
\[
\K(x\mid y)\ :=\ \min\{\,|p|:\ U(p,y)=x\,\}
\]
is the shortest description of $x$ \emph{given} $y$. It operationalizes how much new information is needed to reconstruct $x$ once $y$ is known (e.g., “world given regulator,” or “output given model”).

The \textit{mutual algorithmic information} (up to the usual $O(\log)$ slack) is
\[
M(x{:}y)\ :=\ \K(x)+\K(y)-\K(x,y)\ =\ \K(x)-\K(x\mid y)\ =\ \K(y)-\K(y\mid x)\ \pm O(\log).
\]
$M(x{:}y)$ measures the algorithmically \emph{shared} structure between $x$ and $y$: how many bits we save when describing one with the help of the other. In our setting, “the regulator contains a model of the world” means $M(W{:}R)>0$ (information‑theoretic dependence), not necessarily a dynamical replica.

Intuitively, strings produced by shorter programs are more likely. Solomonoff–\textit{Levin’s universal a priori semimeasure} $m(x)$ and the\textit{ Coding Theorem} link probability and description length:
\begin{equation} \label{eq:levin}
-\log_2 m(x)\ =\ \K(x)\ \pm O(1),
\end{equation}
providing a universal Occam calculus over individual strings.   \cite{Solomonoff64a,solomonoffFormalTheoryInductive1964a, zvonkinCOMPLEXITYFINITEOBJECTS1970,LiVitanyi4th}.

In what follows, a finite temporal horizon $N$ is fixed throughout; unless stated otherwise, we implicitly
condition on $N$ (e.g., write $\K(x)$ for $\K(x\mid N)$). All $O(1)$ constants depend only on the choice of $U$
(and the fixed constant‑overhead wrapper that decodes $(W,R)$ and simulates their coupling to print the readout),
never on particular strings; see Appendix~\ref{app:3tape}.

\subsection{The Coupled World-Regulator System}
We work with 3-tape Turing machines $W$  and  $R$   (see Figure~\ref{fig:setup} and Appendix~\ref{app:3tape}). We  identify each machine with its minimal self–delimiting program (\(|W|=K(W)\), \(|R|=K(R)\)) \cite{LiVitanyi4th}. 
 A horizon \(N\in\mathbb{N}\) is fixed and all complexities are conditioned on \(N\)  unless otherwise stated.
 \(W\) and  \(R\) interact causally for \(N\) steps, producing a deterministic readout \(O^{(N)}_{W,R}\in\{0,1\}^N\). 
 The dynamical equations are  
\begin{equation}
    O_W = W(O_R), \: O_R = R(O_W).
    \end{equation}

The performance of the regulator is evaluated from the complexity of the output, $K(x)$. Intuitively, a good regulator produces outputs of lower complexity than the unregulated case.
Since \(x=O^{(N)}_{W,R}\) is computable from \((W,R,N)\),
\begin{equation}\label{eq:detbound}
K( x) \;\le\; K(W,R) + O(1) \;=\; K(W) + K(R) - M(W\!:\!R) + O(1).
\end{equation}

 \begin{figure} [t!]
    \centering
    \includegraphics[width=0.99\linewidth]{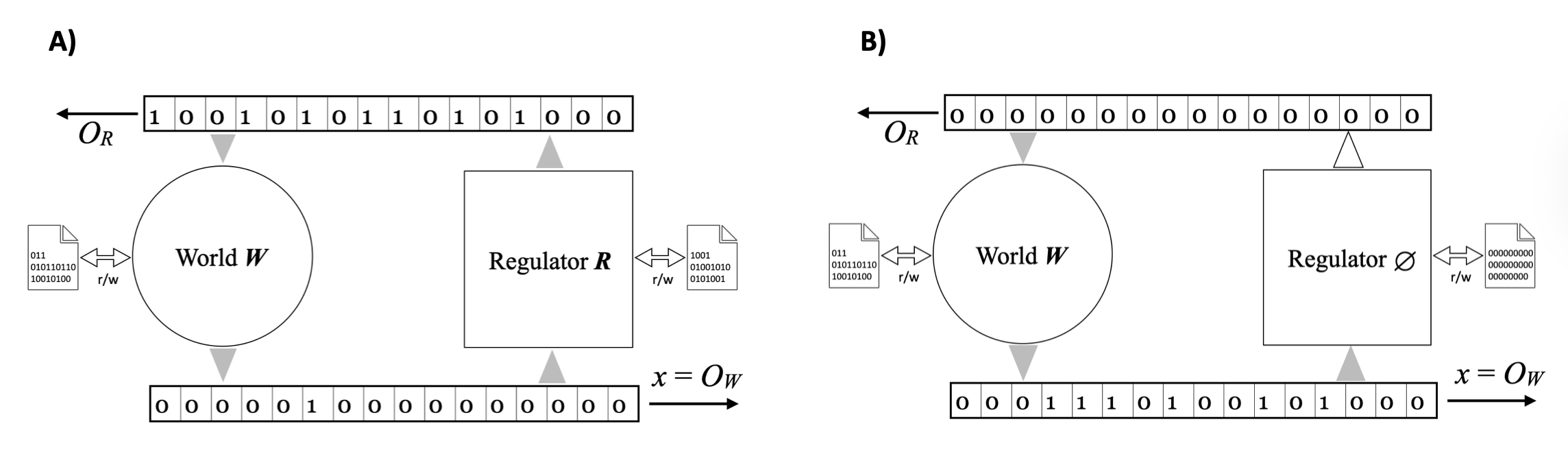}
    \caption{Regulation scenario. A) A good regulator $R$ interacts with the world $W$ so that the readout $x=O_W$ of the world’s output is clamped to a simple, highly compressible sequence (e.g., almost all zeros).  B) When the regulator is turned off, the output is more complex.
    }
     \label{fig:setup}
\end{figure}

To disentangle the role of \(R\) from the coarse event “\(K(O^{(N)})\) is small,” we fix a \emph{null} regulator \(\varnothing\) (where \(R\)'s output is set to zero). We compare the events
\begin{equation}
E^R_a:\ K(O^{(N)}_{W,R} )= a
\quad\text{vs}\quad
E^\varnothing_b:\ K(O^{(N)}_{W,\varnothing} )= b,
\end{equation}
with \(b>a\). Event \(E^\varnothing_b\) rules out worlds that produce a simple output \emph{without} regulation; the intersection \(E^R_a\wedge E^\varnothing_b\) isolates \(R\)’s contribution.

For notational simplicity,  the on–case and off–case readouts are also expressed as
\[
x:=O^{(N)}_{W,R},\qquad y:=O^{(N)}_{W,\varnothing}.
\]

For a fixed time horizon, we write $O_W$ for the full output produced by $W$ when coupled to $R$.



In the next sections, we provide our main results regarding mutual information between world and regulator, and implications for inferring agent-like behavior in the regulator.  

\section{Probabilistic Regulator Theorems}
\label{sec:ugar-final}

\subsection{Posterior form, given the observed $x$}
 
\begin{lemma}[Program posterior given \(x\)]
\label{thm:posterior-x-ugar}
With prefix prior \(P(p)=2^{-|p|}\) and deterministic likelihood \(P(x\mid p)=\mathbf{1}\{U(p)=x\}\),
\[
P(p\mid x)=\frac{2^{-|p|}}{m(x)}.
\]
Consequently, by \eqref{eq:coding-ugar},
\[
\frac{1}{c_2}\,2^{\,K(x)-|p|}\ \le\ P(p\mid x)\ \le\ \frac{1}{c_1}\,2^{\,K(x)-|p|}.
\]
\end{lemma}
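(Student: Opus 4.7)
The plan is to apply Bayes' rule directly and then substitute the two-sided Coding Theorem bound. First I would compute the marginal: $P(x)=\sum_p P(x\mid p)\,P(p)=\sum_{p:\,U(p)=x} 2^{-|p|}$, which is exactly the universal a priori semimeasure $m(x)$ by its definition. Bayes then yields
\[
P(p\mid x)\;=\;\frac{P(x\mid p)\,P(p)}{P(x)}\;=\;\frac{\mathbf{1}\{U(p)=x\}\,2^{-|p|}}{m(x)},
\]
which is the first claimed identity; it is a genuine probability distribution on the fibre $\{p:U(p)=x\}$ since $\sum_{p:\,U(p)=x} 2^{-|p|}/m(x)=1$.

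Second, I would invoke the Coding Theorem in its explicit two-sided form, $c_1\,2^{-\K(x)}\le m(x)\le c_2\,2^{-\K(x)}$ (equivalently \eqref{eq:levin} with the $O(1)$ slack made quantitative). Inverting the inequalities gives $(1/c_2)\,2^{\K(x)}\le 1/m(x)\le (1/c_1)\,2^{\K(x)}$, and multiplying by the prefix mass $2^{-|p|}$ transports this bound into the posterior expression, producing
\[
\frac{1}{c_2}\,2^{\K(x)-|p|}\;\le\;P(p\mid x)\;\le\;\frac{1}{c_1}\,2^{\K(x)-|p|},
\]
as required.

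There is essentially no serious obstacle in this argument; both steps are mechanical once Bayes and the Coding Theorem are in hand. The only subtleties worth flagging are (a) that $P(p)=2^{-|p|}$ is a semimeasure rather than a normalized prior, but the normalization drops out of the Bayes ratio because the likelihood is a $\{0,1\}$-valued indicator supported on the prefix domain of $U$; (b) the constants $c_1,c_2$ should be propagated explicitly rather than absorbed into an $O(1)$, so the slack in the Coding Theorem is visible in the final bound; and (c) the identities are to be read on the fibre $\{p:U(p)=x\}$, outside of which $P(p\mid x)=0$ trivially. All three are bookkeeping remarks rather than real difficulties.
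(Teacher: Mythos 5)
Your argument is correct and follows essentially the same route as the paper's proof: Bayes' rule with the prefix prior and indicator likelihood gives the evidence $P(x)=m(x)$ and hence the posterior $2^{-|p|}/m(x)$, and the two-sided Coding Theorem bound on $m(x)$ then yields the stated sandwich. Your bookkeeping remarks (semimeasure normalization cancelling in the ratio, keeping $c_1,c_2$ explicit, restriction to the fibre $\{p:U(p)=x\}$) match the clarifications the paper itself makes about the machine-dependent slack.
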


\begin{proof}
For any finite string \(x\),
\[
K(x):=\min\{|p|:U(p)=x\},\qquad
m(x):=\sum_{p:\,U(p)=x}2^{-|p|}.
\] (recall Eq.~\ref{eq:levin}).
The Coding Theorem gives machine-dependent constants \(c_1,c_2>0\) with
\begin{equation}
\label{eq:coding-ugar}
c_1\,2^{-K(x)} \ \le\ m(x) \ \le\ c_2\,2^{-K(x)}.
\end{equation}
 Now, briefly, Bayes’ rule yields \(\Pr\{p\mid x\}=2^{-|p|}/m(x)\); apply \eqref{eq:coding-ugar}.
 In more detail, place the prefix prior $P(p)=2^{-|p|}$ on programs $p$ and use the deterministic likelihood
$P(x\mid p) = \mathbf{1}\{U(p)=x\}$.
Then the evidence is $P(x)=m(x)$ and the posterior is
\[
  P(p\mid x)
  \;=\;
  \frac{P(x\mid p)P(p)}{P(x)}
  \;=\;
  \begin{cases}
    \displaystyle \frac{2^{-|p|}}{m(x)}, & U(p)=x,\\[6pt]
    0, & \text{otherwise.}
  \end{cases}
\]

Then, 
for any $p$ with $U(p)=x$,
\[
  \frac{1}{c_2}\,2^{K(x)-|p|}
  \;\le\;
  P(p\mid x)
  \;=\;
  \frac{2^{-|p|}}{m(x)}
  \;\le\;
  \frac{1}{c_1}\,2^{K(x)-|p|}.
\]

The relation between $K(x)$ and $m(x)$ holds only up to an additive $O(1)$ term in $K$,
which becomes a multiplicative constant on $m(x)$. This $O(1)$ ambiguity is unavoidable
and depends on the choice of universal prefix machine $U$; $c_1,c_2$ absorb exactly this
machine-dependent slack.
\end{proof}

Now, in our setting the \emph{world} \(W\) and \emph{regulator} \(R\) are programs that interact for \(N\) steps, producing the on-case readout
\(x:=O^{(N)}_{W,R}\).
A fixed, constant-overhead \emph{wrapper} decodes a shortest description of \((W,R)\) and simulates the coupling to print \(x\)
(\emph{decode\,+\,simulate}); if \(p_{W,R}\) denotes this canonical code, then
\begin{equation}
\label{eq:canon-ugar}
|p_{W,R}|=K(W,R)+O(1),
\qquad
P\big((W,R)\mid x\big)\ \in\ \Big[\tfrac{1}{\tilde c_2},\tfrac{1}{\tilde c_1}\Big]\cdot 2^{\,K(x)-K(W,R)},
\end{equation}
for constants \(\tilde c_i:=2^{O(1)}c_i\).

Now we can use the definition of  \textit{mutual algorithmic information} (up to the usual $O(\log)$ slack) to write
\[
M(W{:}R)\ =\ \K(W)+\K(R)-\K(W,R)\  
\]
and derive our first result:
\begin{theorem} \label{thm:first}
\begin{equation}
\label{eq:first}
P\big((W,R)\mid x\big)\ \in\ \Big[\tfrac{1}{\tilde c_2},\tfrac{1}{\tilde c_1}\Big]\cdot 2^{\,K(x)-K(W)-K(R) +M(W{:}R)} < \frac{1}{\tilde c}\,  2^{M(W{:}R)}
\end{equation}
\end{theorem}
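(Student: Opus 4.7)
The plan is to derive both parts of \eqref{eq:first} by purely algebraic manipulation of two ingredients already established in the excerpt: the canonical posterior bound \eqref{eq:canon-ugar} for the wrapper code $p_{W,R}$, and the deterministic readout inequality \eqref{eq:detbound}. No new complexity-theoretic input is needed---the work is entirely one of substituting the definition of mutual algorithmic information and then collecting the additive constants.

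First, I would start from \eqref{eq:canon-ugar},
\[
P\big((W,R)\mid x\big)\ \in\ \Big[\tfrac{1}{\tilde c_2},\tfrac{1}{\tilde c_1}\Big]\cdot 2^{\,K(x)-K(W,R)},
\]
and invoke the symmetric form of mutual algorithmic information, $K(W,R)=K(W)+K(R)-M(W{:}R)$, valid up to the standard $O(\log)$ slack flagged in the definition of $M$. Substituting for $K(W,R)$ in the exponent immediately gives
\[
2^{\,K(x)-K(W,R)}\ =\ 2^{\,K(x)-K(W)-K(R)+M(W{:}R)},
\]
which, after absorbing the $O(\log)$ slack into the constants $\tilde c_1,\tilde c_2$, proves the two-sided inclusion in \eqref{eq:first}.

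For the strict inequality I would appeal to \eqref{eq:detbound}: since $x$ is computable from $(W,R,N)$ via the fixed constant-overhead wrapper, $K(x)\le K(W,R)+O(1)$, so $K(x)-K(W,R)\le O(1)$. Substituting this into the exponent above yields
\[
K(x)-K(W)-K(R)+M(W{:}R)\ =\ \bigl(K(x)-K(W,R)\bigr)+M(W{:}R)\ \le\ M(W{:}R)+O(1),
\]
hence $2^{K(x)-K(W)-K(R)+M(W{:}R)}\le 2^{O(1)}\,2^{M(W{:}R)}$. Combining with the upper end $1/\tilde c_1$ of the inclusion and folding $2^{O(1)}$ into a fresh constant $\tilde c$ gives the claimed $P((W,R)\mid x)\le (1/\tilde c)\,2^{M(W{:}R)}$. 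The only real obstacle is bookkeeping: the Coding Theorem constants $c_i$, the wrapper overhead, and the symmetry-of-information slack each contribute $O(1)$ (or $O(\log)$) terms that must be consistently collected into $\tilde c_1,\tilde c_2,\tilde c$; beyond that there is no substantive computation.
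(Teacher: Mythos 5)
Your proposal is correct and follows essentially the same route the paper intends: substitute $K(W,R)=K(W)+K(R)-M(W{:}R)$ into the wrapper posterior bound \eqref{eq:canon-ugar} and then use the deterministic bound $K(x)\le K(W,R)+O(1)$ to absorb $K(x)-K(W,R)$ into the constant $\tilde c$. One small refinement: the paper later treats $M(W{:}R):=K(W)+K(R)-K(W,R)$ as an exact definition, so the substitution is an identity with no $O(\log)$ slack to absorb --- which is cleaner, since an $O(\log)$ term cannot legitimately be folded into a fixed constant.
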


\subsection{The Good  Algorithmic Regulator and Posterior with Contrast}
For our second result, we first define the \textit{Good Algorithmic Regulator (GAR)}. 
\begin{definition}[Good Algorithmic Regulator, contrastive]
\label{def:uGAR}
Given the on/off complexities and gap
\[
a:=K(O^{(N)}_{W,R}),\qquad b:=K(O^{(N)}_{W,\varnothing}),\qquad \Delta:=b-a.
\]
we say that \(R\) is a \emph{good algorithmic regulator of gap \(\Delta\)} for \(W\) at horizon \(N\)
if \(\Delta>0\).
\end{definition}

\begin{lemma}[OFF run lower-bounds the world]
\label{lem:off-ugar}
There exists \(c_0=O(1)\) such that
\[
K\!\big(O^{(N)}_{W,\varnothing}\big)\ \le\ K(W)+c_0\qquad\Rightarrow\qquad K(W)\ \ge\ b-c_0.
\]
\end{lemma}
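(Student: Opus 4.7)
My plan is to exhibit a short prefix program for the OFF readout $y:=O^{(N)}_{W,\varnothing}$ that uses a shortest self\textendash delimiting description of $W$ verbatim, paying only a constant overhead. The inequality then follows by the definition of $K$, and the second inequality is immediate from $b:=K(O^{(N)}_{W,\varnothing})$.

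\paragraph{Steps.}
First, I would fix a shortest self-delimiting program $p_W$ for $W$ on $U$, so $|p_W|=K(W)$; recall that $N$ is implicitly conditioned on throughout, hence available to the decoder. Next, I would invoke the same fixed, constant\textendash overhead wrapper that underlies the canonical code $p_{W,R}$ in \eqref{eq:canon-ugar}, specialized to $R=\varnothing$. Concretely, the null regulator is a specific fixed three\textendash tape machine that writes the zero symbol on its output tape at every step; its source code is a constant string, which I hardwire inside the wrapper. The wrapper then (i) reads $p_W$ from its input tape (its prefix-freeness tells the decoder where $p_W$ ends), (ii) instantiates $W$ by simulating $U$ on $p_W$, (iii) couples the interface tapes of $W$ to the hardwired $\varnothing$ using the standard protocol of Appendix~\ref{app:3tape}, (iv) runs the closed loop for $N$ steps, and (v) copies the world's output tape to its own output tape and halts.

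Calling $\pi$ the wrapper (including the hardwired $\varnothing$), the concatenation $q:=\pi\,p_W$ is a legal self\textendash delimiting program with $U(q)=y$ and $|q|=|p_W|+|\pi|=K(W)+c_0$, where $c_0:=|\pi|=O(1)$ depends only on $U$ and the fixed simulation protocol, not on $W$ or $N$. By the definition of $K$,
\[
K(y)\ \le\ |q|\ =\ K(W)+c_0,
\]
which is the first inequality. Substituting $b=K(y)$ and rearranging yields $K(W)\ge b-c_0$, as claimed.

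\paragraph{Obstacle.}
There is no conceptual obstacle; the only care required is the bookkeeping ensuring that $q$ is genuinely prefix\textendash free (which follows because $p_W$ is self\textendash delimiting and $\pi$ consumes exactly $|p_W|$ bits before simulating) and that the constant $c_0$ truly depends only on the fixed machinery. This is the same wrapper\textendash plus\textendash simulation argument already used to justify \eqref{eq:canon-ugar}, applied with the specific choice $R=\varnothing$; thus the lemma is essentially a tautology once the setting of Section~2 is accepted.
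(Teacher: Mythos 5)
Your proposal is correct and is precisely the argument the paper gives, only spelled out in full: the paper's one-line proof also just invokes the constant-overhead wrapper that simulates the OFF dynamics from a description of $(W,\varnothing,N)$ and prints $O^{(N)}_{W,\varnothing}$, yielding $K(O^{(N)}_{W,\varnothing})\le K(W)+O(1)$ and hence $K(W)\ge b-c_0$. Your added bookkeeping about prefix-freeness of $\pi\,p_W$ and the hardwired null regulator is a faithful elaboration, not a different route.
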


\begin{proof}
Given \((W,\varnothing,N)\), the wrapper simulates the OFF dynamics and prints \(O^{(N)}_{W,\varnothing}\) with \(O(1)\) overhead.
\end{proof}

With this definition we can now state and prove our main theorem. 
\begin{theorem}[Probabilistic regulator theorem]
\label{thm:ugar}
Let \(O^{(N)}_{W,R}\) and $\mathsf{E}_b^R$ be observed and let \(\Delta:=K(O^{(N)}_{W,\varnothing})-K(O^{(N)}_{W,R})\).
Then there exists \(C>0\) such that
\[
P\!\big((W,R)\mid O^{(N)}_{W,R},\mathsf{E}_b^R\big)\ \le\ C\;\cdot\;2^{\,M(W{:}R)}\;2^{-\Delta}.
\]
Equivalently, every bit by which \(M(W{:}R)\) falls short of \(\Delta\) costs a factor \(\approx 2^{-1}\) in posterior support.
\end{theorem}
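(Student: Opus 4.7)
The plan is to combine three ingredients already established in the excerpt: the program-posterior bound used to derive Theorem~\ref{thm:first}, the OFF Lemma~\ref{lem:off-ugar}, and the symmetry-of-information identity $K(W,R)=K(W)+K(R)-M(W{:}R)$ (up to the standard $O(\log)$ slack). The algebra is short; the delicate point is absorbing the normalization introduced by conditioning on $\mathsf{E}_b^R$ into the final constant.

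First, I would start from the canonical-code posterior bound established just above Theorem~\ref{thm:first},
\[
P\big((W,R)\mid x\big)\ \le\ \tfrac{1}{\tilde c_1}\,2^{\,K(x)-K(W,R)},
\]
which holds for every $(W,R)$ whose canonical decode-and-simulate wrapper prints $x$. Substituting $K(x)=a$ and expanding $K(W,R)$ by symmetry of information turns the exponent into $a-K(W)-K(R)+M(W{:}R)$, up to an $O(\log)$ term destined to be folded into the final constant.

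Second, I would restrict to the support of $\mathsf{E}_b^R$. On that event, Lemma~\ref{lem:off-ugar} gives $K(W)\ge b-c_0$, so
\[
a-K(W)\ \le\ a-b+c_0\ =\ -\Delta+c_0.
\]
Using the trivial bound $K(R)\ge 0$ to drop the $-K(R)$ term, the exponent collapses to $M(W{:}R)-\Delta+O(1)+O(\log)$. Folding all machine-dependent and slack constants into a single $C'$ yields
\[
P\big((W,R)\mid x\big)\ \le\ C'\cdot 2^{\,M(W{:}R)-\Delta}
\]
uniformly on the restricted support. The final step is to pass from $P(\cdot\mid x)$ to $P(\cdot\mid x,\mathsf{E}_b^R)$; since $\mathsf{E}_b^R$ is a deterministic predicate on $(W,R)$ once $N$ is fixed, for any $(W,R)\in\mathsf{E}_b^R$ we have $P((W,R)\mid x,\mathsf{E}_b^R)=P((W,R)\mid x)/P(\mathsf{E}_b^R\mid x)$, and the factor $1/P(\mathsf{E}_b^R\mid x)$ depends on the observation but not on the particular hypothesis $(W,R)$, so it can be absorbed into $C$.

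The main obstacle I anticipate is that last normalization: to call $C$ a genuine constant, one must either (i) prove $P(\mathsf{E}_b^R\mid x)$ is bounded away from zero, or (ii) adopt the weaker reading in which $C$ may depend on the observed pair $(a,b)$ but remains uniform in $(W,R)$. I would take route (ii), since it mirrors the machine-dependent-constant convention already used in the Coding Theorem statements throughout the excerpt, and explicitly state that $C=C(a,b)$ is uniform in the hypothesis. A secondary bookkeeping nuisance is keeping track of the $O(\log N)$ symmetry-of-information slack so that no hidden term grows faster than $\Delta$; writing the bound in the form $2^{M(W{:}R)-\Delta+O(\log)}$ first and only then collapsing the $O(\log)$ into $C$ avoids ambiguity.
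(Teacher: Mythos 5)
Your proposal is correct and follows essentially the same route as the paper's proof: the canonical-wrapper posterior bound, the symmetry-of-information decomposition of $K(W,R)$, Lemma~\ref{lem:off-ugar} to lower-bound $K(W)$ by $b-c_0$, and dropping $2^{-K(R)}\le 1$. Your explicit handling of the renormalization induced by conditioning on $\mathsf{E}_b^R$ is in fact more careful than the paper, which treats $\mathsf{E}_b^R$ only as a side-event restricting the support and silently absorbs that factor into $C$; your route (ii), letting $C$ depend on the observed pair $(a,b)$ but not on the hypothesis $(W,R)$, is the honest reading.
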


\begin{proof}[Proof]
\textit{(i) Posterior via wrapper.} From Eq.~\eqref{eq:canon-ugar},
\(
\log_2 P((W,R)\mid x)\ \le\ K(x)-K(W,R)+O(1)=a-K(W,R)+O(1).
\)

\textit{(ii) Decompose \(K(W,R)\).} 
We use the exact mutual information
$
M(W{:}R):=K(W)+K(R)-K(W,R)
$,
\(
K(W,R)=K(W)+K(R)-M(W{:}R),
\)
hence
\[
K(x)-K(W,R)=a-K(W)-K(R)+M(W{:}R).
\]

\textit{(iii) Insert OFF bound (where \(b\) enters).} By Lemma~\ref{lem:off-ugar}, \(K(W)\ge b-c_0\), so
\[
K(x)-K(W,R)\ \le\ M(W{:}R) - (b-a) - K(R) + c_0\ =\ M(W{:}R) - \Delta - K(R) + c_0.
\]

\textit{(iv) Exponentiate and absorb constants.} Exponentiating and using \(2^{-K(R)}\le 1\) gives
\(
P((W,R)\mid x,\mathsf{E}_b^R)\ \le\ C\, 2^{\,M(W{:}R)}\,2^{-\Delta}
\)
for a constant \(C\) absorbing \(2^{c_0}\) and the wrapper Coding-Theorem constants.
\end{proof}

\paragraph{Clarifications.}
(i) \emph{Where does \(b\) appear?} Only via Lemma~\ref{lem:off-ugar}, which says the OFF run lower-bounds \(K(W)\).
We never need to compute \(b\) explicitly.  
(ii) \emph{Why can we drop \(2^{-K(R)}\)?} A slightly sharper bound is
\(P((W,R)\mid x,\mathsf{E}_b^R)\le C\,2^{M(W{:}R)}2^{-\Delta}2^{-K(R)}\). Since \(K(R)\ge 0\), dropping \(2^{-K(R)}\le 1\)
keeps the focus on the two interpretable scalars \(M\) and \(\Delta\) without changing the exponential scaling.  
(iii) \emph{Architecture-agnostic.} The proof only uses the computable wrapper \((W,R,N)\mapsto x\).
Whether \(R\) is open- or closed-loop does not affect the posterior algebra. iv) The posterior on the left of Theorem~\ref{thm:ugar} is conditioned on the
\emph{on‑case} observation $x$ only. The \emph{off‑case} run is used solely to
supply a numeric lower bound $b:=K(O^{(N)}_{W,\varnothing})$, which implies
$K(W)\ge b-O(1)$ by simulation. Formally, we phrase the result as a bound
on $\Pr((W,R)\mid x,\mathsf{E}_b^R)$, where $\mathsf{E}_b^R$ is the side‑event
``$K(O^{(N)}_{W,\varnothing})= b$''. 

As a consequence of Theorem~\ref{thm:ugar}, one can bound individual posterior masses by $O(2^{K(x)-K(W,R)})$. This implies an exponential tail: $\Pr{M(W:R) \le \Delta - k} = O(2^{-k})$. In other words, $M(W:R)$ is concentrated within $O(1)$ of its maximum $\Delta$. I.e., there exists $C'>0$ (machine/wrapper dependent only) such that for all integers $k\ge 0$,
\[
\Pr\!\Big\{\, M(W{:}R)\ \le\ \Delta - k\ \Bigm|\ x,\ \mathsf{E}_b^R \Big\}\ \le\ C'\,2^{-k}.
\]
 
\begin{plainbox}
\textbf{How to read (and use) Theorem~\ref{thm:ugar}.}
\begin{enumerate}[leftmargin=1.2em,itemsep=2pt,topsep=2pt]
\item \emph{What we measure:} compute the on/off complexities \(a=K(O^{(N)}_{W,R})\) and \(b=K(O^{(N)}_{W,\varnothing})\) (in practice: fixed MDL code lengths); their difference \(\Delta=b-a\) is the \emph{compressibility advantage}.
\item \emph{What the bound says:} for any explanation \((W,R)\) of the observed \(x\), the universal posterior weight is penalized as \(2^{-\Delta}\) unless the pair shares structure: larger \(M(W{:}R)\) \emph{compensates} the penalty.
\item \emph{Practical rule of thumb:} sustained large \(\Delta\) across tasks makes low \(M(W{:}R)\) exponentially unlikely. If off‑case \(b\) is already small, \(\Delta\) will be small—choose a diagnostic readout so the null is not trivially simple.
\end{enumerate}
\end{plainbox}
\subsection{Inferring the Objective Function and Planner (As-If Agent)}
\label{sec:objective-ait}
We next provide a simple theorem regarding the role of complexity as an objective function.
\begin{theorem}[On/Off evidence equals \emph{unconditioned} complexity gap]
\label{thm:onoff-BF-uncond}
%
Under the universal a priori semimeasure,
\begin{equation}
\label{eq:uncond-bf}
\log_2\frac{m(O^{(N)}_{W,R})}{m(O^{(N)}_{W,\varnothing})}
\;=\;
K(O^{(N)}_{W,\varnothing})\;-\;K(O^{(N)}_{W,R})\ \pm O(1).
\end{equation}
Equivalently, writing the  on/off gap as
\(
\Delta:=K(O^{(N)}_{W,\varnothing})-K(O^{(N)}_{W,R}),
\)
we have
\(
m(O^{(N)}_{W,R})/m(O^{(N)}_{W,\varnothing})\;=\; \Theta\!\big(2^{\,\Delta}\big).
\)
Hence, on the realized pair \((O^{(N)}_{W,R},O^{(N)}_{W,\varnothing})\), maximizing the likelihood of “ON over OFF’’ is \emph{equivalent} (up to a constant factor) to minimizing \(K(O^{(N)}_{W,R})\) or, equivalently, maximizing the gap \(\Delta\).
\end{theorem}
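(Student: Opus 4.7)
The plan is to derive \eqref{eq:uncond-bf} as a two-line consequence of Levin's Coding Theorem \eqref{eq:levin}, applied once to each string in the ratio. First I would invoke the identity $-\log_2 m(z) = K(z) \pm O(1)$ separately with $z = O^{(N)}_{W,R}$ and with $z = O^{(N)}_{W,\varnothing}$, using the same universal prefix machine $U$ fixed in Section~2 so that the additive slack in each equation is an $O(1)$ constant depending only on $U$ (and the fixed wrapper), not on $N$, $W$, or $R$. Subtracting the two identities yields
\[
\log_2\frac{m(O^{(N)}_{W,R})}{m(O^{(N)}_{W,\varnothing})}
\;=\;K(O^{(N)}_{W,\varnothing})-K(O^{(N)}_{W,R})\;\pm O(1),
\]
which is exactly the stated log-ratio formula. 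Exponentiating base $2$ then converts the additive $O(1)$ into a multiplicative constant sitting in a compact interval $[c_-, c_+]\subset(0,\infty)$, giving the $\Theta(2^{\Delta})$ form in one step.

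For the ``as-if'' consequence stated after the display, I would fix the realized pair and observe that once $(W,N)$ is specified, the off-case run is deterministic, so $b:=K(O^{(N)}_{W,\varnothing})$ is a fixed scalar that does not depend on the choice of regulator $R$. Writing $a:=K(O^{(N)}_{W,R})$ and $\Delta = b - a$, the log-ratio reads $b-a \pm O(1)$, and since $b$ is $R$-independent, the map $R \mapsto \log_2\bigl[m(O^{(N)}_{W,R})/m(O^{(N)}_{W,\varnothing})\bigr]$ is, up to the $O(1)$ slack, a strictly decreasing affine function of $K(O^{(N)}_{W,R})$. Hence maximizing ON-over-OFF evidence is equivalent to minimizing $K(O^{(N)}_{W,R})$, which is equivalent to maximizing $\Delta$. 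This is precisely the ``canonical scalar objective'' advertised in the abstract.

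The only obstacle is constant bookkeeping rather than a genuine mathematical difficulty: I would need to confirm that (i) both invocations of the Coding Theorem use the same $U$ and the same convention of implicit conditioning on the horizon $N$ (as declared at the end of Section~2), so the two $O(1)$ slacks combine into a single additive constant under subtraction; (ii) the wrapper‐dependent constants from \eqref{eq:canon-ugar} are not needed here, because the statement concerns only the readout strings, not the program posterior; and (iii) the multiplicative interval $[c_-,c_+]$ after exponentiation is the same machine‐dependent pair that appears in \eqref{eq:coding-ugar}, so no new universal constants are introduced. Because the argument never touches the internal structure of $W$ or $R$, the conclusion is automatically distribution-free and holds for the individual realized sequences, as required.
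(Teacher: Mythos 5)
Your proposal is correct and follows essentially the same route as the paper's own proof: apply the Coding Theorem $-\log_2 m(z)=K(z)\pm O(1)$ separately to the ON and OFF readouts, subtract, and exponentiate to get the $\Theta(2^{\Delta})$ form. Your added bookkeeping about the shared machine $U$, the implicit conditioning on $N$, and the $R$-independence of $b$ is consistent with the paper's conventions and does not change the argument.
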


\begin{proof}
By the Coding Theorem there exist machine-dependent constants \(c_1,c_2>0\) such that
\(c_1 2^{-K(z)}\le m(z)\le c_2 2^{-K(z)}\) for any string \(z\).
Apply this to \(x\) and \(O^{(N)}_{W,\varnothing}\), take base-2 logs, and subtract:
\[
- \log_2 m(O^{(N)}_{W,R}) \;=\; K(O^{(N)}_{W,R}) \pm O(1),\qquad
- \log_2 m(O^{(N)}_{W,\varnothing}) \;=\; K(O^{(N)}_{W,\varnothing}) \pm O(1),
\]
so
\(
\log_2 \tfrac{m(O^{(N)}_{W,R})}{m(O^{(N)}_{W,\varnothing})} = K(y)-K(O^{(N)}_{W,R}) \pm O(1).
\)
\end{proof}

This  statement compares \emph{two different strings} (the realized ON and OFF outputs) and aligns with the contrastive quantities used elsewhere. 
The log universal Bayes factor for ``ON vs.\ OFF'' is seen to equal the  complexity gap
\(
\Delta\pm O(1).
\)
Thus, on each episode, a regulator behaves \emph{as if} it were maximizing the scalar
$
\Delta ,
$
equivalently minimizing $K\!\big(O^{(N)}_{W,R}\big)$.

 Thus, given a regulator $R$ that persistently reduces the readout’s complexity relative to a null baseline $\varnothing$ (the GAR setting of Def.~\ref{def:uGAR}), we can justify---on purely observational grounds---that $R$ behaves \emph{as if it were minimizing a scalar objective}. The objective should be canonical (not post hoc) and usable across episodes/tasks.

\section{Discussion}
\label{sec:discussion}

We can summarize now our results: 
\paragraph{First regulator result: posterior form, given the observed $x$ (Th.~\ref{thm:first}).}
 By Solomonoff induction and the Coding Theorem \cite{Solomonoff64a,solomonoffFormalTheoryInductive1964a,ZvonkinLevin70,Vitanyi2013,Hutter2007}, we showed that
\begin{equation}
\label{eq:intro-core}
\Pr\!\big((W,R)\mid x\big)\;=\;\frac{2^{-\K(W,R)+O(1)}}{m(x)}
\sim  2^{\,\K(x)-\K(W,R)} < \frac{1}{\tilde c}\,  2^{M(W{:}R)}
\end{equation}
Thus \emph{shorter joint generators are exponentially preferred}; every extra bit in $\K(W,R)$ halves the posterior weight. Decomposing
\begin{equation}
    \K(W,R)\;=\;\K(W)+\K(R)\;-\;M(W{:}R)\ \pm O(\log)
\end{equation}
 shows that, \emph{at fixed marginals} $\K(W),\K(R)$, the posterior is \emph{exponentially tilted} in the algorithmic mutual information $M(W{:}R)$:
each extra bit of $M(W{:}R)$ multiplies posterior odds by $\approx 2$.

\paragraph{Second regulator result: posterior with contrast (Th.~\ref{thm:ugar}).}
\emph{Without contrast}, the story is pure Occam: \eqref{eq:intro-core} anchors the posterior near $\K(W,R)\approx \K(x)$ with a geometric excess-length tail; for fixed $\K(W), \K(R)$, this yields a high-probability lower bound on $M(W{:}R)$ roughly $\K(W)+\K(R)-\K(x)$.  
\emph{With contrast}, if turning the regulator \emph{on} yields $\K(O^{(N)}_{W,R} )=  a$ while the \emph{off} case has $\K(O^{(N)}_{W,\varnothing} )= b$ with $b>a$, then any explaining $(W,R)$ obeys
\[
\Pr((W,R)\mid x)\ \le\ C\,2^{\,M(W{:}R)}\,2^{-\Delta}\!,
\]
so \emph{low mutual information is exponentially disfavored} as the gap $\Delta =b-a$ grows. In both regimes, the operational slogan holds: \emph{see a simple string ($\K(x)$ small), suspect a simple generator ($\K(W,R)$ small)}, and at fixed marginals this means \emph{suspect larger $M(W{:}R)$}.

The inutition behind these results is that
\emph{seeing a simple string suggests its generation by a simple program.}
Formally, for the coupled hypothesis \(P=(W,R)\) (wrapped as a single self‑delimiting program), observing \(x=O_W^{(N)}\) yields the Solomonoff posterior
$
\Pr(P\mid x)\; 
\sim 2^{\,\K(x)-\K(P)},
$ 
by the Coding Theorem \cite{Solomonoff64a,solomonoffFormalTheoryInductive1964a,ZvonkinLevin70,Vitanyi2013,Hutter2007}.
Every extra bit of joint description \(\K(P)=\K(W,R)\) halves posterior weight.
This is the quantitative Occam tilt that operationalizes the slogan above.
 

The posterior mass of joint programs longer than \(\K(x)+k\) decays geometrically:
\[
\Pr\{\,\K(W,R)\ge \K(x)+k\mid x\,\}\ \le\ 2C\,2^{-k}.
\]
Hence the typical joint length is near \(\K(x)\). If \(\K(W)\) and \(\K(R)\) are externally constrained (e.g., by design or prior knowledge), this tail translates directly into a \emph{lower} posterior bound on \(M(W{:}R)\) of the form
\(
M(W{:}R)\gtrsim \K(W)+\K(R)-\K(x)-O(\log(1/\delta))
\)
with posterior confidence \(1-\delta\).

Our results are most informative when the observed readout $O_W^{(N)}$ is \emph{simple}. If $K(O_W^{(N)})$ is large, the posterior constraints on joint complexity and on mutual information are inherently weak.
From the geometric tail, for any $\delta\in(0,1)$ there exists $k=\lceil\log_2(2C/\delta)\rceil$ such that, with posterior probability at least $1-\delta$,
\[
K(W,R)\ \le\ K(O_W^{(N)})+k.
\]
At fixed marginals $K(W)$ and $K(R)$ this yields
\[
M(W{:}R)\ \ge\ K(W)+K(R)-K(O_W^{(N)})-k-O(\log)\qquad\text{with probability }\ge 1-\delta.
\]
Hence, if $K(O_W^{(N)})$ is \emph{large} (comparable to $K(W)+K(R)$), the lower bound on $M(W{:}R)$ may be trivial (near $0$ up to logs). Intuitively, a complex output does not force shared structure. It is compatible with a complex joint generator even when $W$ and $R$ share little algorithmic information.

On the other hand,  the strength of the conclusion depends on the \emph{gap} $\Delta =b-a$:
\[
\Pr\!\big((W,R)\mid O_W^{(N)},\mathsf{E}_b^R\big)\ \le\ C\,2^{\,M(W{:}R)}\,2^{-\Delta},\qquad
\Pr\!\big\{M(W{:}R)\le \Delta-k\ \big|\ O_W^{(N)},\mathsf{E}_b^R\big\}\ \le\ C'2^{-k}.
\]
Thus even if $a=K(O^{(N)}_{W,R} )$ is not very small, a \emph{large} off/on gap still enforces a \emph{large} posterior $M(W{:}R)$. In other words, contrast rescues identifiability of shared structure: the evidence scales exponentially in $\Delta$.

In the same universal calculus, regulation carries a canonical scalar interpretation:
\emph{runtime} behavior is as if minimizing $K(O_W^{(N)} )$ (i.e., maximizing the on/off gap $\Delta$),
and \emph{design-time} comparison across explanations favors larger $M(W{:}R)-\Delta$ via the GAR
posterior tilt. This supplies an MDL/Occam objective grounded in the coding theorem (not an
ad hoc utility) and complements the IMP’s structural requirements.

 We note that a low \(\K(O_W^{(N)})\) \emph{alone} does not prove high \(M(W{:}R)\); it concentrates posterior mass on \emph{short} joint generators \(P\). High \(M(W{:}R)\) follows (i) when \(\K(W)\) and \(\K(R)\) are fixed/known, or (ii) when contrast pins \(\K(W)\) high via the off case. Without such constraints, short \(P\) could also arise from individually simple \(W\) and \(R\).

\paragraph{Third regulator result: as-if Objective-function minimization (Th~\ref{thm:onoff-BF-uncond}).}
On the realized $O_W^{(N)}$, the conditional Coding Theorem gives
$\log_2\!\big(m(O_W^{(N)})/m(O^{(N)}_{W,\varnothing} )\big)=
K(O^{(N)}_{W,\varnothing} )-K(O_W^{(N)})$.
Thus, the \emph{runtime} scalar  to minimize  is $K(O_W^{(N)})$. Together with the above, this implies that the regulator is acting (as-if) like an algorithmic agent (with a model of the world, objective function and planner).

Theorem~\ref{eq:uncond-bf} is a \textit{representation} statement--- not a mechanism: $R$ need not compute $K$, but persistent large $\Delta$ is exactly what maximizes universal evidence for ``ON'', and  it simultaneously makes low $M(W{:}R)$ exponentially unlikely.
 For a \emph{mechanistic} objective beyond the Minimum Description Length (MDL) evidence, three constructive routes are standard and complementary. First, in \emph{Linear Time-Invariant (LTI)} plants the \emph{Internal Model Principle} makes a structural claim—perfect robust regulation for a specified signal class requires embedding a dynamical copy of the exosystem in the controller—and optimal stabilizing designs arise from explicit quadratic/convex costs (e.g., the \emph{Linear Quadratic Regulator, LQR}); in the nonlinear case, output‑regulation theory yields constructive regulators under solvable regulator equations together with immersion/detectability and (local) zero‑dynamics stability \cite{FrancisWonham1975AMO,FrancisWonham1976Automatica,Sontag2003,Isidori45168,huangNonlinearOutputRegulation2004,priscoliAdaptiveObserversNonlinear2006,AndersonMoore1990}. Second, in \emph{inverse optimal control} and \emph{inverse reinforcement learning (IRL)}, trajectories that satisfy \emph{Karush--Kuhn--Tucker (KKT)} regularity allow identification of a cost $J$ (up to equivalences) whose minimizers reproduce the behavior; in discrete settings, IRL recovers reward functions consistent with observed policies \cite{NgRussell2000,AbbeelNg2004,Ziebart2008}. Third, in \emph{revealed‑preference} analysis, if cross‑episode choices satisfy the \emph{Generalized Axiom of Revealed Preference (GARP)}, Afriat and Varian guarantee the existence of a strictly increasing, concave utility that rationalizes the data, while Debreu’s representation and the Savage/Karni--Schmeidler frameworks provide (state‑dependent) expected‑utility forms under their axioms \cite{Afriat1967,Varian1982,Debreu1954,Savage1954,KarniSchmeidler2016}. 

\paragraph{Planner/policy representation (as‑if agent).}
Any deterministic causal regulator $R$ induces a computable \emph{policy}
$\pi_R:\mathcal{H}_t\!\to\!\mathcal{A}$ mapping the coupled history
$h_t$ (past interface I/O up to time $t$) to the next actuator symbol.
This is simply the operational semantics of $R$ viewed as a function of histories.

The coding‑theorem Bayes‑factor identity (Thm.~\ref{thm:onoff-BF-uncond}) 
supplies a canonical scalar 
such that, \emph{on the realized episode}, the sequence of actions produced by $\pi_R$
is \emph{as if} chosen to maximize $J$ subject to the world dynamics. Together with the
algorithmic “internal model’’ conclusion $M(W{:}R)>0$ (i.e., $K(W\mid R)<K(W)$), this yields
the standard agent triad:
\[
\text{(model) } M(W{:}R)>0,\qquad
\text{(objective) } J(x)=K(y)-K(x),\qquad
\text{(policy/planner) } \pi_R.
\]
\emph{Interpretation.} This is a representation statement, not a claim that $R$ explicitly
solves an optimization problem or contains a modular planner. The existence of $\pi_R$ is tautological
for any deterministic $R$; the “as‑if’’ objective follows from the universal evidence identity above.
Across \emph{tasks/episodes}, if the induced choices satisfy standard consistency axioms (e.g., GARP),
classical revealed‑preference theorems guarantee the existence of a (monotone, concave) utility that
rationalizes the behavior \cite{Afriat1967,Varian1982}; and in dynamical settings, inverse
optimal control / inverse RL constructs a cost for which the observed policy is (near‑)optimal \cite{NgRussell2000,AbbeelNg2004}.
Thus, given (i) algorithmic model content $M(W{:}R)>0$ and (ii) the canonical scalar $J$ from the
coding‑theorem calculus, interpreting the regulator as carrying a \emph{policy/planner} is both natural
and technically justified.

\subsection*{Why AIT is needed}
Our results are \emph{single-episode} and \emph{distribution-free}: they make statements about an individual realized readout $x$ and about the pair $(W,R)$ as concrete programs, without positing a stochastic source. Classical (Shannon) information theory quantifies \emph{expected} code lengths and mutual information with respect to a specified probability law; entropy $H(X)$ and mutual information $I(X;Y)$ are undefined without a distribution, and asymptotic statements (AEP/typical sets) further require ergodicity/mixing assumptions  \cite{CoverThomas2006}.  In our setting, there is no given probabilistic model over worlds, regulators, or outputs—indeed, the point is to \emph{infer} model content from a single realized $x$.

AIT supplies exactly the missing calculus. First, it provides a canonical, machine‑invariant complexity for \emph{individual} strings, $K(x)$, and a universal a priori \emph{semi}measure $m(x)$ (Solomonoff–Levin), connected by the Coding Theorem: $-\log m(x)=K(x)\pm O(1)$ \cite{Solomonoff64a,solomonoffFormalTheoryInductive1964a,zvonkinCOMPLEXITYFINITEOBJECTS1970}. This yields a universal Occam posterior over programs, 
$
\Pr\!\big(p\mid x\big)  \asymp\ 2^{\,K(x)-|p|},
$
from which (i) the geometric excess‑length tail and (ii) our \emph{contrastive} tilt bounds follow. No analogue exists in Shannon’s framework without positing an external prior over programs; there is no “canonical” $\Pr(p)$ or $\Pr(x)$ in Shannon theory.

Second, AIT lets us formalize “the regulator contains a model of the world” as \emph{algorithmic dependence}, i.e.\ positive mutual algorithmic information $M(W{:}R)>0$ (equivalently $K(W\mid R)<K(W)$), a notion defined for individual objects and invariant up to $O(1)$ (\cite{LiVitanyi4th}). By contrast, Shannon’s $I(W;R)$ requires a joint distribution over $(W,R)$, which is neither given nor natural here.

Third, our key inequalities explicitly use $m(\cdot)$ and prefix complexity: the posterior tilt $2^{\,K(x)-K(W,R)}$, the OFF‑run lower bound on $K(W)$ by simulation, and the contrastive penalty $2^{-\Delta}$ all rely on the Coding Theorem and Kraft–McMillan properties of \emph{prefix} programs—again, objects absent from Shannon’s ensemble‑level calculus.

Finally, while one can approximate $K(\cdot)$ with MDL/codelengths in practice, MDL’s justification itself rests on the AIT view that \emph{shorter descriptions are better} and on the coding‑theorem linkage between description length and (universal) probability  \cite{Grunwald2007}. In short: AIT provides the universal prior ($m$), object‑level complexities ($K$), and mutual algorithmic information ($M$) needed to turn the informal slogan “see a simple string, suspect a simple generator” into posterior and contrastive theorems—none of which can be stated in Shannon’s framework without ad hoc model classes and priors.

\subsubsection*{Relation to the Internal Model Principle (IMP)}

In the IMP, the closed loop is $(E,C,P)$: an \emph{autonomous} exosystem $E$ (no inputs and no explicit time dependence, e.g.\ $\dot w = S w$), a controller $C$ (the regulator), and a plant $P$. The regulated error is $e = r - y$, where the reference $r$ and disturbances are generated by $E$ and $y$ is measured from $P$ \cite{FrancisWonham1975AMO,FrancisWonham1976Automatica}. In our notation, we group the \emph{World} as $W=(E,P)$ and take the \emph{Regulator} as $R \equiv C$ (see Figure~\ref{fig:wr-imp-yourlayout} and Table~\ref{tab:thermostat-mapping} for the comparison of the two frameworks in the case of a thermostat).

The assumptions in IMP theorems are:
(i) Classical necessity is sharpest for \emph{finite-dimensional LTI} plants (linear, time-invariant) with exogenous signals generated by a finite-dimensional, neutrally stable LTI $E$; stabilizability/detectability and robustness (one fixed $C$ works for a plant neighborhood) are standard \cite{FrancisWonham1975AMO,FrancisWonham1976Automatica}. 
(ii) The structural conclusion is \emph{internal-model necessity}: perfect robust regulation for the specified signal class requires that $C$ embed a dynamical copy of $E$ (e.g., integrators for steps, oscillators for sinusoids); in MIMO, a $p$-copy is needed. 
(iii) Nonlinear generalizations (output regulation) require solvability of the \emph{regulator equations}, suitable \emph{immersion/detectability}, and (local) stability of the \emph{zero dynamics}; guarantees are typically \emph{local/semiglobal}, and necessity is not universal \cite{Isidori45168,huangNonlinearOutputRegulation2004,priscoliAdaptiveObserversNonlinear2006}. 
(iv) Infinite-dimensional/distributed settings and periodic signals may require infinite-dimensional internal models; technicalities arise with unbounded I/O operators \cite{BinEtAl2022-AR}.

In the AIT formulation (here), we assume:
(i) \emph{Architecture-agnostic}: no required split into $E$ vs.\ $P$, and no specified place where $R$ enters the causal path; we only assume a computable wrapper mapping $(W,R,N)\mapsto O_W$ for a fixed horizon $N$. 
(ii) \emph{Deterministic, closed} coupling of world and regulator (no stochastic noise sources into $W$); statements are \emph{distribution-free} and about the realized sequence. 
(iii) ``Model'' means \emph{algorithmic dependence}: $M(W\!:\!R)>0$ (equivalently $K(W|R)<K(W)$), not a literal dynamical replica. 
(iv) The main necessity is \emph{probabilistic}: a positive on/off complexity gap $\Delta = K(O_{W,\emptyset}) - K(O_{W,R})$ exponentially tilts the universal posterior against explanations with small $M(W\!:\!R)$; no linearity, smoothness, or regulator-equation conditions are imposed. See Secs.~2--6 of this work. 

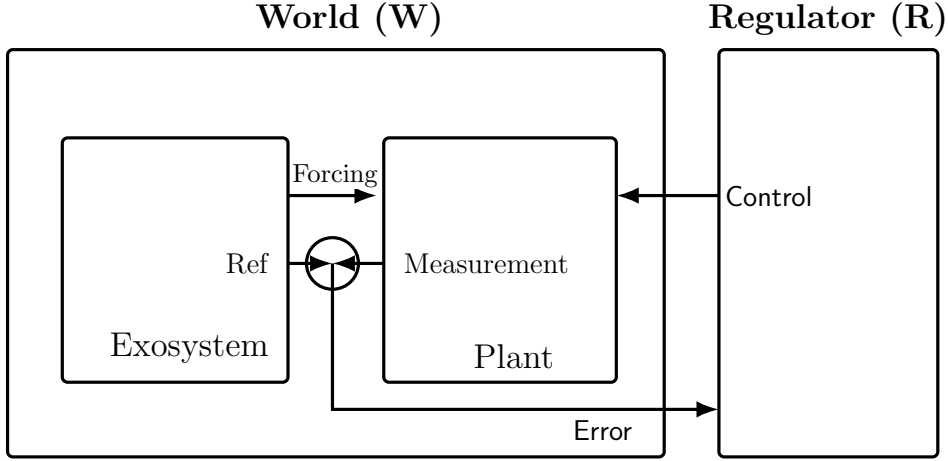
\begin{figure}[t]
\centering

\newcommand{\figscale}{0.90} 

\begin{tikzpicture}[
  scale=\figscale,
  transform shape, 
  font=\sffamily,
  >=Latex,
  box/.style={draw,line width=1.2pt,rounded corners=2pt},
  wire/.style={line width=1.2pt},
  lbl/.style={fill=white,inner sep=1pt}
]

\draw[box] (0,0) rectangle (9.6,6);
\node[font=\bfseries\large] at (5,6.45) {World (W)};

\draw[box] (10.4,0) rectangle (13.6,6);
\node[font=\bfseries\large] at (12.0,6.45) {Regulator (R)};

\draw[box] (0.8,1.1) rectangle (4.1,4.7);
\node[font=\large,anchor=south west] at (1.35,1.25) {Exosystem};

\draw[box] (5.5,1.1) rectangle (8.9,4.7);
\node[font=\large] at (7.4,1.49) {Plant};
\node[font=\normalfont] at (7,2.85) {Measurement};

\coordinate (Sum) at (4.75,2.85);
\draw[line width=1.2pt] (Sum) circle (0.38);

\draw[wire,->] (4.1,3.85) -- (5.4,3.85);
\node[lbl,font=\small] at (4.79,4.15) {Forcing};

\draw[wire,->] (4.1,2.85) -- (Sum);
\node[lbl,anchor=west,font=\normalfont] at (3.15,2.85) {Ref};

\draw[wire,->] (5.5,2.85) -- (Sum);

\draw[wire,->] (Sum) |- (9.9,0.7) -- (10.4,0.7);
\node[lbl] at (8.7,0.4) {Error};

\draw[wire,<-] (8.9,3.85) -- (10.4,3.85);
\node[lbl,anchor=west] at (10.45,3.85) {Control};

\end{tikzpicture}

\caption{To connect the IMP and the AIT formulation used here, we view the World $W$ as a box containing $E$ and $P$; the Regulator/Controller $R$ (or $C$) is a separate box. Arrows depict \emph{Forcing} ($E\!\to\!P$), \emph{Ref} ($E\!\to$ sum), the \emph{Error} path (sum $\downarrow$ to the world boundary and $\rightarrow R$), and \emph{Control} ($R\!\to\!P$).}
\label{fig:wr-imp-yourlayout}
\end{figure}
 
IMP yields a \emph{structural} necessity (internal model in $C$ of $E$) under explicit dynamical hypotheses; the AIT formulation yields an \emph{information‑theoretic} necessity (positive $M(W\!:\!R)$ favored by the data) without assuming linearity, an $E/P$ split, or a particular causal insertion point for $R$. The two are complementary: IMP is the backbone for constructive regulation in structured classes; the AIT view covers unstructured architectures and single episodes with a universal Occam calculus \cite{FrancisWonham1975AMO,FrancisWonham1976Automatica,Sontag2003,Isidori45168,huangNonlinearOutputRegulation2004,BinEtAl2022-AR}.

\paragraph{The home thermostat.}
As an example, consider a home thermostat as regulator/controller.
Let \(P\) be the living room\,{+}\,heater dynamics (thermal capacitance, heat loss, delays) and \(E\) the exogenous processes (setpoint schedule, outdoor weather/solar, occupancy). The \emph{Internal Model Principle} (IMP) states that exact output regulation for a specified signal class is possible only if the controller embeds a copy of the \emph{exosystem} \(E\) that generates those signals (e.g., an integrator for steps, an oscillator for a fixed sinusoid); plant knowledge is used for stabilization/shaping, but the IMP necessity targets \(E\) itself \cite{FrancisWonham1976,ByrnesIsidori1990}. In our \emph{AIT} view, a regulator \(R\) is “good’’ when it makes the realized readout more compressible than a null baseline; a sustained compressibility gap implies that \(R\) shares computable structure with the \emph{whole world} \(W{=}(P,E)\):
\[
M(W{:}R)\;=\;M\big((P,E):R\big)\;=\;M(P{:}R)\;+\;M(E{:}R\mid P)\;\pm O(\log).
\]
A simple on/off thermostat with a deadband tuned to the room time constant typically yields a bounded limit cycle (not zero steady-state error); under IMP it lacks the needed internal model of constants (no embedded integrator), hence it does \emph{not} achieve exact regulation of the “constant’’ class \cite{FrancisWonham1976,AstromMurray2008}. Nevertheless, in the AIT sense it still qualifies as a regulator: its policy encodes a very compressed model spanning \(P\) (heating raises \(T\), room inertia) and weak regularities in \(E\) (quasi-constant setpoint, slowly varying weather), giving \(M(W{:}R)>0\) \cite{ConantAshby1970}. PI/PID or predictive thermostats remedy the IMP shortfall by embedding the appropriate internal model (and often explicit models of \(P\) and aspects of \(E\)) \cite{AstromMurray2008}.

The AIT regulator framework (as well as the original GRT) is  therefore \emph{more general} than IMP: the regulator must carry a model of the world \(W=E\cup P\), where \(P\) is the plant (house/HVAC thermodynamics) and \(E\) the exogenous processes (setpoint schedule, weather/solar/occupancy), and IMP is recovered as a special case when the performance target is exact output regulation over a specified signal class. Under IMP, a controller qualifies for exact regulation only if it embeds a dynamical copy of the \emph{exosystem} that generates the reference/disturbances (e.g., an integrator for steps, oscillators for sinusoids)---no model of \(P\) is required beyond stabilizability/detectability \cite{FrancisWonham1976}; nonlinear output-regulation extends this under additional immersion/detectability and regulator-equation solvability assumptions \cite{ByrnesIsidori1990}. 

\begin{table*}[t]
\centering \scriptsize
\renewcommand{\arraystretch}{1.18}
\setlength{\tabcolsep}{4pt}
\begin{tabularx}{\textwidth}{
>{\raggedright\arraybackslash}p{3.2cm}
>{\raggedright\arraybackslash}X
>{\raggedright\arraybackslash}X
>{\raggedright\arraybackslash}X}
\toprule
\textbf{Role} &
\textbf{IMP language} &
\textbf{AIT language (this work)} &
\textbf{Thermostat instantiation} \\
\midrule
\textbf{Exogenous generator} &
\textbf{Exosystem} $E$: autonomous generator of references/disturbances (no feedback from $C$); exact regulation is defined w.r.t.\ a signal class $\mathcal U$. &
Fold into the \textbf{World} $W$; no architectural split is required (but may still conceptually identify this subpart). &
\textbf{Reference} $r(t)$: setpoint schedule (often clock‑driven). \textbf{Disturbances}: outdoor temperature, solar load, occupancy heat gains. \\
\addlinespace[2pt]
\textbf{Plant} &
\textbf{Plant} $P$: room thermal dynamics + actuator/sensor; used for stabilization/shaping. &
Also inside \textbf{World} $W$. &
$R$–$C$ (thermal) model, heater actuation, heat losses, sensor dynamics/delay. \\
\addlinespace[2pt]
\textbf{Controller / Regulator} &
\textbf{Controller} $C$ (the regulator in IMP). &
\textbf{Regulator} $R$. &
Thermostat logic: bang‑bang with hysteresis, PI/TPI, or scheduled control. \\
\addlinespace[2pt]
\textbf{Measured output} &
$y$. &
World readout $x$ extracted from the transcript (often $x{=}y$ or the error string $e_{1:T}$). &
Indoor temperature $T_{\mathrm{in}}$ (or a weighted error signal). \\
\addlinespace[2pt]
\textbf{Error / objective} &
$e=r-y$; IMP concerns \emph{asymptotic} $e\!\to\!0$ for all $r,d$ in the class $\mathcal U$ (internal model must match $E$). &
Score regulation by \emph{compressibility} of the chosen readout $x$ with $R$ \emph{ON} vs.\ an \emph{OFF} baseline ($R{=}\varnothing$). Define the gap
$\Delta \;=\; K(x_{\text{off}})\;-\;K(x_{\text{on}})$
(practically, use a fixed MDL code $L_C$ in place of $K$). &
Good thermostat $\Rightarrow$ $x_{\text{on}}$ (e.g., temperature or error) stays near a regular deadband pattern $\Rightarrow$ shorter code than the null/open‑loop case (heater OFF or fixed duty). \\
\bottomrule
\end{tabularx}
\caption{Mapping the IMP triple $(E,C,P)$ and the AIT $(W,R)$ view to a simple thermostat. IMP emphasizes an internal model of the exosystem $E$ for \textit{exact} regulation over a signal class; AIT treats $W{=}(P,E)$ jointly and assesses regulation by a compressibility advantage~$\Delta$.}
\label{tab:thermostat-mapping}
\end{table*}

\emph{Our statements are thus complementary and distinct:} in AIT, we work in a distribution‑free, program‑level setting and make no linearity or smoothness assumptions. We remain agnostic about what the regulator needs to model and do not demand exact regulation. We do not assert the existence of a dynamical replica inside $R$. Instead, we show that sustained contrastive compressibility ($\Delta>0$) \emph{tilts} the universal posterior toward pairs $(W,R)$ with larger mutual algorithmic information $M(W{:}R)$, i.e., $R$ carries algorithmic structure about $W$. Thus, “the regulator contains a model” is made precise as $M(W{:}R)>0$ (information‑theoretic dependence), not as an embedded exosystem. The IMP supplies structural necessity for perfect regulation within specified signal classes; our AIT results supply information‑theoretic necessity for observed compressibility advantages, beyond linearity or probabilistic assumptions  \cite{Sontag2003}.

\subsubsection*{Practical estimation of $\K$ and the gap $\Delta$}
\label{subsec:practical}
Our theorems are stated in terms of prefix Kolmogorov complexity, which is not computable.
In practice, one can fix a reference prefix code $C$ and estimate upper bounds, 
\[
\widehat{a}:=L_C\!\big(O^{(N)}_{W,R}\big),\quad
\widehat{b}:=L_C\!\big(O^{(N)}_{W,\varnothing}\big),\quad
\widehat{\Delta}=\widehat{b}-\widehat{a},
\]
with the \emph{same} compressor $C$ used across all conditions. 
Persistent \(\widehat{\Delta}>0\) across tasks is cumulative evidence that explanations with \emph{low}
\(M(W{:}R)\) are exponentially unlikely; maximizing \(\widehat{\Delta}\) is the natural  scalar
objective the regulator appears to optimize on the observed data.

Some standard choices for providing upper bounds to Kolmogorov complexity  are Lempel-Ziv compressors (LZ77/LZ78/LZW).
LZ‑type compressors are universal in a weak sense for stationary ergodic sources and are widely available.
Implementations (gzip, lz4, etc.) are practical proxies for $L_C(\cdot)$ \cite{ZivLempel1977,ruffiniLempelZipComplexityReference2017}.
If both ON and OFF strings are available and a scale‑free sanity check of contrast is needed, we can compute
\[
\mathrm{NCD}(x,y)\ :=\ \frac{C(xy)-\min\{C(x),C(y)\}}{\max\{C(x),C(y)\}},
\]
where $C(\cdot)$ is the chosen code length and $xy$ is concatenation \cite{liSimilarityMetric2004,cilibrasiClusteringCompression2004}. 
NCD is heuristic but can reveal whether $x$ is “closer” to trivial baselines than $y$.

The Block Decomposition Method (BDM) estimates $\K$ by tiling a string (or array) into small blocks whose complexities are looked up from \emph{Coding‑Theorem‑Method} (CTM) tables (exhaustive output frequency statistics of small machines), plus a logarithmic penalty for multiplicities,
$
\widehat{\K}_{\mathrm{BDM}}(x)\ \approx\ \sum_{i}\K_{\mathrm{CTM}}(b_i) + \log m_i,
$
where $b_i$ are distinct blocks and $m_i$ their multiplicities (see \cite{soler-toscanoCalculatingKolmogorovComplexity2014,zenilDecompositionMethodGlobal2016}). This is sensitive to small‑scale algorithmic regularities beyond LZ’s parse statistics; it works on 1D/2D data (but
depends on the chosen CTM table --- size and machine model --- and it suffers from boundary/tiling effects and additive constants that can be sizable for short $N$).

 Finally, alternatives include \emph{learned} compressors based on neural networks. Autoencoder/variational–autoencoder codecs optimize a rate–distortion (thus MDL) objective, with an explicit codelength view via ELBO and practical lossless coding through bits‑back \cite{Grunwald2007,hinton1993autoencoders,kingma2019introvae,townsend2019bitsback,ho2019localbitsback}. In images and video, end‑to‑end trained autoencoders, hyperpriors, and autoregressive priors are now standard \cite{balle2018hyperprior,minnen2018joint,habibian2019video}. Transformer‑based compressors are competitive and rapidly improving—both for images via transformer or hybrid Transformer–CNN codecs \cite{lu2021transformercompression,liu2023mixedtransformercnn} and for general lossless compression using language‑model predictors and specialized transformer compressors \cite{deletang2024lmcompression,valmeekamLLMZipLosslessText2023a}; see also cross‑modal results reported with large models \cite{liLosslessDataCompression2025} and neural codecs for audio \cite{zeghidour2021soundstream}. \emph{From an MDL perspective, these models implement universal codes whose lengths upper‑bound the negative log‑likelihood under the learned generative model}.

To improve discrimination, we can i) use \emph{paired} ON/OFF measurements on the same horizon $N$; report $\widehat{\Delta}$ and its sampling variability across repeats/seeds; ii) 
include trivial controls (e.g.\ all‑zero regulator and randomized regulator) to sanity‑check that $\widehat{\Delta}$ responds in the expected direction; iii)
for finite $N$, complement point estimates with nonparametric tests (paired permutations on $\widehat{\Delta}$ across episodes); iv)
when outputs are multivariate/real‑valued, discretize with a fixed, reported quantization and alphabet before compression.

\section{Conclusion}\label{sec:conclusion}

We developed a contrastive, algorithmic formulation of regulation: a regulator \(R\) is \emph{good} for a world \(W\) at horizon \(N\) when it yields a compressible readout that is strictly more compressible than under a null baseline \(\varnothing\). This places the GRT claim (“good regulators are models”) on an AIT footing.

If switching a regulator on makes a system’s measured output much simpler to describe (i.e., more compressible) than when the regulator is off, then the regulator is very likely to carry non‑trivial information about the world it controls—in the precise Algorithmic Information Theory  sense of positive mutual algorithmic information between world and regulator. The strength of this evidence grows exponentially with the compressibility gap: large $\Delta$ makes explanations with little shared structure vanishingly likely. Practically, this turns the old cybernetics slogan “every good regulator is a model of the system” into a quantitative, testable claim that does not assume linearity, stochastic models, or specific architectures. On each run, the theorem also singles out a canonical scalar objective: the regulator behaves as if it were minimizing the description length of the realized readout (equivalently, maximizing $\Delta$).  

%

Probabilistically, if \(W\) and \(R\) are independently sampled minimal programs (no mutual information), then low readout complexity—and especially the contrastive event “low under \(R\), high under \(\varnothing\)”—is exponentially unlikely in \(|W|\) and \(|R|\). Thus, sustained compressibility relative to baseline is strong evidence that \(R\) shares non‑trivial algorithmic structure with \(W\) (\(M(W\!:\!R)>0\)). This is the AIT face of the Good Regulator idea and complements the Internal Model Principle’s structural necessity results for classical regulation:   the IMP identifies structural necessities for perfect/robust regulation in classical settings, whereas our AIT view applies beyond linearity and probability and turns regulation into a statement about \emph{description length}.  This bridge clarifies in what limited (yet precise) sense the cybernetics aphorism ``good regulators must model'' can be made rigorous \cite{ConantAshby1970,BaezGRT}: successful regulation implies positive mutual algorithmic information between world and regulator.


The result supplies: (i) a distribution‑free, single‑episode diagnostic for “does the controller contain a model?”, (ii) a complement to the IMP (which requires embedding a copy of the signal generator under more restrictive and structured assumptions), and (iii) a simple experimental recipe—fix a lossless compressor, quantize the readout, compute two code lengths (ON vs. OFF), and use their difference $\Delta$ as evidence of model content in the controller.

Finally, the coding‑theorem view identifies a canonical scalar and implicates a planner: runtime minimization of
$K(x)$ (equivalently, maximization of $\Delta$).

All together, these results provide the grounds to justify that if a system is seen to regulate another in the algorithmic sense (reducing the complexity of an output of the regulated system compared to no regulation), we can reasonably infer it is likely that the regulator uses a model of the regulated system and an associated scalar objective function.

\subsection*{Acknowledgments}
The author thanks Francesca Castaldo for discussions and reviewing the manuscript. The author thanks David Wolpert (SFI) for highilighting open questions regarding the  classical regulator theorem.

\bibliographystyle{plainurl}
\bibliography{references}
\clearpage

\begin{landscape}
\begin{table}[p]
\centering
\scriptsize
\renewcommand{\arraystretch}{1.20}
\setlength{\tabcolsep}{5pt}
\begin{tabularx}{\linewidth}{>{\raggedright\arraybackslash}p{3.2cm} X X X}
\toprule
\textbf{Aspect} & \textbf{GRT (Conant--Ashby, 1970)} & \textbf{IMP (Francis--Wonham, 1975/76; Sontag, 2003)} & \textbf{A-GRT (Algorithmic, this work)}\\
\midrule

\textbf{Setting / Objects}
& System $S$, Regulator $R$, Disturbances/Inputs $D$, Outcomes $Z$. Mapping $\psi:(S,R)\!\mapsto\!Z$; compare regulators by entropy of $Z$.
& Plant $P$ in feedback with Controller $C$; exogenous signals from an \emph{exosystem} $E$; regulated output $y$ and error $e=r-y$.
& World $W$ and Regulator $R$ are deterministic causal prefix programs (3‑tape UTM) that interact over interface tapes for horizon $N$; readout $x=O^{(N)}_{W,R}$.\\
\addlinespace

\textbf{Symbols (explicit)}
& $S$ (system), $R$ (regulator), $D$ (disturbance/input), $Z$ (outcome), $H(\cdot)$ (Shannon entropy).
& $P$ (plant), $E$ (exosystem/signal generator), $C$ (controller), $y$ (regulated output), signal class $\mathcal U$ (e.g., steps/sinusoids/polynomials).
& $W$ (shortest world program), $R$ (shortest regulator program), $x:=O^{(N)}_{W,R}$ (ON readout), $y:=O^{(N)}_{W,\emptyset}$ (OFF readout), $K(\cdot)$ (prefix complexity), $M(\cdot\!:\!\cdot)$ (mutual algorithmic information).\\
\addlinespace

\textbf{Definition of ``model''}
& Deterministic mapping/homomorphism $h:S\!\to\!R$ that preserves task‑relevant structure so outcomes have low entropy.
& \emph{Internal model}: a dynamical subsystem embedded in $C$ that reproduces $E$ (controller contains a copy of $E$’s dynamics; in LTI, matching poles such as integrators/resonators).
& \emph{Algorithmic model (program)}: $R$ shares computable structure with $W$—formally $M(W\!:\!R)>0$ (equivalently $K(W\mid R)\!<\!K(W)$); no need for a literal dynamical replica.\\
\addlinespace

\textbf{Notion of ``goodness''}
& “Maximally successful and simple”: minimize $H(Z)$ and avoid unnecessary regulator randomness/complexity.
& \emph{Perfect regulation} for a specified class $\mathcal U$ (exact asymptotic tracking/disturbance rejection, robustness in class).
& \emph{Compressibility of realized readout}: good if $K(x)$ is small at the chosen $N$; use contrastive \emph{gap} $\Delta := K(O^{(N)}_{W,\emptyset}) - K(O^{(N)}_{W,R})>0$.\\
\addlinespace

\textbf{Core Theorem Statement}
& Among regulators that minimize $H(Z)$ and are simplest, there is a deterministic $h:S\!\to\!R$; informally: “every good regulator is (contains) a model of the system.”
& \textbf{Necessity:} perfect regulation for class $\mathcal U$ \emph{requires} $C$ to embed a copy of $E$ (an internal model).
& \textbf{Algorithmic necessity:} with ON $x$ and OFF complexity $K(O^{(N)}_{W,\emptyset})\!=\!b$, the universal posterior obeys
$\Pr((W,R)\mid x,\!E_b^R)\,\le\,C\,2^{\,M(W\!:\!R)}\,2^{-\Delta}$.
Thus sustained $\Delta\!>\!0$ makes low $M(W\!:\!R)$ exponentially unlikely; on the realized episode, maximizing ON over OFF likelihood is equivalent (up to $O(1)$) to minimizing $K(x)$ (i.e., maximizing $\Delta$).\\
\addlinespace

\textbf{Assumptions}
& $Z$ is well‑defined from $(S,R)$ and disturbances; regulators compared by $H(Z)$ and simplicity. \emph{Ref:} \href{https://doi.org/10.1080/00207727008920220}{Conant \& Ashby (1970)}.
& Typically finite‑dimensional LTI; stabilizable/detectable; $E$ autonomous and neutrally stable; exact asymptotic tracking/rejection for $\mathcal U$; robustness in a plant neighborhood. \emph{Refs:} \href{https://doi.org/10.1007/BF01447855}{Francis \& Wonham (1975)}, \href{https://doi.org/10.1016/0005-1098(76)90006-6}{Francis \& Wonham (1976)}, \href{https://doi.org/10.1016/S0167-6911(03)00136-1}{Sontag (2003)}.
& Deterministic closed coupling; fixed universal prefix machine and horizon $N$; $W,R$ are minimal self‑delimiting programs; constant‑overhead wrapper for $(W,R,N)\!\mapsto\!O^{(N)}_{W,R}$; diagnostic readout (contrast usable). In practice, estimate $K(\cdot)$ with fixed MDL codelengths.\\
\addlinespace

\textbf{Restrictions / Limitations}
& “Model” notion is weak (mapping); success tied to entropy of $Z$ (can reward trivial predictable outcomes); no explicit stability claims.
& Sharpest for LTI; nonlinear/output‑regulation extensions add local solvability/detectability/zero‑dynamics stability; necessity generally local/structural.
& Information‑theoretic (not structural) necessity; strength depends on diagnostic $\Delta$; $K(\cdot)$ uncomputable (use fixed compressor/MDL); single‑episode statements (with probabilistic tilt).\\
\addlinespace

\textbf{Scope / Use}
& Conceptual cybernetics link: regulation $\Rightarrow$ representation (\emph{model‑building is compulsory}).
& Design backbone for robust regulation (integral action, embedded oscillators); concrete synthesis constraints.
& Distribution‑free, single‑episode diagnostics; empirical recipe: fix a lossless compressor, quantize readout, compute ON/OFF code lengths, use $\Delta$ as evidence of model content; complements IMP with universal Occam calculus. \emph{AIT refs:} \href{https://link.springer.com/book/10.1007/978-3-030-11298-1}{Li \& Vitányi (2019)}.\\

\bottomrule
\end{tabularx}
\caption{\scriptsize Side‑by‑side comparison of the classical Good Regulator Theorem (GRT), the Internal Model Principle (IMP), and an Algorithmic‑Information‑Theoretic Good Regulator Theorem (A‑GRT). Primary sources (hyperlinked): \href{https://doi.org/10.1080/00207727008920220}{Conant \& Ashby (1970)}, \href{https://doi.org/10.1007/BF01447855}{Francis \& Wonham (1975)}, \href{https://doi.org/10.1016/0005-1098(76)90006-6}{Francis \& Wonham (1976)}, \href{https://doi.org/10.1016/S0167-6911(03)00136-1}{Sontag (2003)}, and \href{https://link.springer.com/book/10.1007/978-3-030-11298-1}{Li \& Vitányi (2019)}.}
\label{tab:grt-imp-agrt}
\end{table}
\end{landscape}

\clearpage

\appendix

\section{Appendix}

\subsection{Setting and core definitions}
\label{app:defs}

\paragraph{Universal machine and prefix complexity.}
Fix a universal \emph{prefix} Turing machine $U$. For any finite binary string $x$,
\[
\K(x):=\min\{|p|:U(p)=x\},\qquad
m(x):=\sum_{p:\,U(p)=x}2^{-|p|}.
\]
By the Coding Theorem there exist machine–dependent $c_1,c_2>0$ with
$c_1 2^{-\K(x)}\le m(x)\le c_2 2^{-\K(x)}$
\cite{solomonoffFormalTheoryInductive1964a,zvonkinCOMPLEXITYFINITEOBJECTS1970, LiVitanyi4th,Vitanyi2013}.

\paragraph{Conditioning convention.}
A finite horizon $N\in\mathbb{N}$ is fixed throughout; unless stated otherwise,
all complexities are implicitly conditioned on $N$, e.g.\ $\K(x):=\K(x\mid N)$ and
$m(x):=m(x\mid N)$.

\paragraph{Machines and transcripts.}
A \emph{world} $W$ and \emph{regulator} $R$ are deterministic causal prefix programs
that interact for $N$ steps via interface tapes. Their closed‑loop interaction produces
a binary readout $x=O^{(N)}_{W,R}\in\{0,1\}^N$. The \emph{off/null} regulator, denoted $\varnothing$,
is the coupling where the regulator’s interface outputs a fixed quiescent symbol (e.g.\ 0) at all steps,
yielding $y=O^{(N)}_{W,\varnothing}$.

\paragraph{Joint description and wrapper.}
A fixed constant‑overhead \emph{wrapper} decodes shortest descriptions of $(W,R)$ and simulates the
coupling to print $O^{(N)}_{W,R}$. Denote by $K(W,R)$ the length of a shortest self‑delimiting
code for the pair. We use standard chain rules (e.g.\ $K(W,R)=K(W)+K(R\mid W)\pm O(1)$).

\paragraph{Mutual algorithmic information.}
For finite strings $x,y$,
\[
M(x{:}y)\ :=\ \K(x)+\K(y)-\K(x,y)\ \pm O(\log(\K(x)+\K(y))).
\]
Equivalently, $M(x{:}y)=\K(x)-\K(x\mid y)\pm O(\log)$ \cite{LiVitanyi4th}.

\paragraph{Good Algorithmic Regulator (contrastive).}
Let $a:=\K(O^{(N)}_{W,R})$ and $b:=\K(O^{(N)}_{W,\varnothing})$. Define the \emph{gap}
\[
\Delta\ :=\ b-a.
\]
We say that $R$ is a \emph{good} algorithmic regulator for $W$ at horizon $N$ if $\Delta>0$.
(In practice, $a$ and $b$ are estimated by fixed MDL codelengths; see \S\ref{subsec:practical}.)

\paragraph{Deterministic upper bound.}
Since the wrapper simulates the coupling, one always has
\[
\K\!\big(O^{(N)}_{W,R}\big)\ \le\ \K(W,R)\ \le\ \K(W)+\K(R)-M(W{:}R)+O(1).
\]

\subsection{Three-Tape Turing Machine} \label{app:3tape}
\begin{definition}[Three-Tape Turing Machine Algorithm]
A \textit{three-tape Turing machine algorithm} is represented by a Turing machine \( T \) with three tapes, and consists of the following components:
\begin{enumerate}
  \item A finite set of states \( Q \), including a designated start state \( q_0 \) and one or more halting states.
  \item A finite alphabet \( \Sigma \), including a blank symbol, used for the input, output, and private tapes.
  \item Three finite tapes, divided into cells, where each cell can contain a symbol from \( \Sigma \). These tapes are designated as the input tape, the output tape, and the non-erase private tape.
  \item A transition function \( \delta: Q \times \Sigma^3 \rightarrow Q \times \Sigma^3 \times \{L, R\}^3 \), defining how the machine moves between states, writes symbols on the three tapes, and moves the tape heads left (L) or right (R) on each tape.
\end{enumerate}
We further identify the state of the private and output states with a set of variables  \(V = \{v_1, v_2, \ldots, v_n\}\), with subsets \(V_{\text{private}}\) and \(V_{\text{output}}\). 
The time evolution of variables in \(V\) is governed by the operation of the Turing machine, as it processes the input, modifies the private tape \(V_{\text{private}}\), and writes to the output tape \(V_{\text{output}}\), according to \(\delta\).
So we can also see an algorithm as a specification of the evolution of a set of variables.

The Turing machine begins in the start state with the input written on the input tape and the other tapes blank. It proceeds according to the transition function, writing into the output and private tapes.  The private tape can be written to but not erased. When the machine reaches a halting state, the output is read from the output tape.

\end{definition}

\subsection{Prefix-free programs vs.\ stop-symbol delimiters (and why it matters)}

\paragraph{Setup.}
Let \(U\) be a universal \emph{prefix} machine: the domain of its halting programs is prefix-free, so no valid program is a prefix of another. The associated (prefix/self-delimiting) Kolmogorov complexity is
\[
K_U(x) \;=\; \min\{\,|p| \;:\; U(p)=x \text{ and } p \text{ is in a prefix-free domain}\,\}.
\]
Working with prefix-free domains aligns program lengths with instantaneous (prefix) codes and invokes Kraft–McMillan inequality, the coding-theoretic backbone that underlies many AIT results, including Levin’s universal distribution and the coding theorem \cite{LiVitanyi4th,Tadaki2010,FortnowNotes}. (See also standard IT references for Kraft–McMillan and prefix codes \cite{XieKraft,CMU_Kraft}.)

\paragraph{Why prefix-freeness is not a mere technicality.}
\begin{enumerate}[leftmargin=1.5em]
\item \textbf{Instantaneous decodability and Kraft sums.}
If the halting programs form a prefix code, then for the multiset of program lengths \(\{\,|p| : U(p)\downarrow\,\}\) we have
\(
\sum_{p} 2^{-|p|} \le 1
\)
by Kraft–McMillan. This lets us interpret \(2^{-|p|}\) as a valid ``budget'' of probability mass per description and leads to semimeasures like Levin’s universal distribution \(m_U(x)=\sum_{U(p)=x}2^{-|p|}\) with \(\sum_x m_U(x)\le 1\). This construction is central to algorithmic probability and to the coding theorem (roughly \(K(x)\approx -\log m(x)\)) \cite{LiVitanyi4th,Sterkenburg2017,ScholarpediaAlgProb,Tadaki2010}. 

\item \textbf{Clean invariance and chaining inequalities.}
The invariance theorem (machine-independence of \(K\) up to \(O(1)\)) and standard chain rules (e.g.\ \(K(x,y)\le K(x)+K(y\mid x)+O(1)\)) are most naturally proved for prefix machines because self-delimitation removes end-of-program ambiguity in compositions and conditional encodings \cite{LiVitanyi4th,FortnowNotes}. 
\end{enumerate}

\paragraph{“Why not just add a stop symbol?”}
Suppose we try to avoid the prefix constraint by allowing programs of the form \(p\#\), where \(\#\) is an end marker.

\begin{itemize}[leftmargin=1.5em]
\item \emph{If the interpreter ignores any trailing bits after \(\#\)}, then any extension \(p\#q\) yields the same computation as \(p\#\). To keep the \emph{domain of halting programs} unambiguous, you must \emph{reject} all extensions \(p\#q\neq p\#\). But rejecting all such extensions is exactly the prefix-free condition in disguise: no valid codeword is a prefix of another. Thus, a well-implemented ``stop-symbol'' machine reduces to a prefix-free machine up to a fixed additive overhead for encoding \(\#\). Consequently, all asymptotic theorems (invariance, coding theorem, bounds using Kraft) remain unchanged up to \(O(1)\) \cite{LiVitanyi4th,FortnowNotes,Tadaki2010}. 

\item \emph{If extensions after \(\#\) are allowed as distinct valid programs}, then the set of halting inputs is \emph{not} prefix-free, Kraft–McMillan can fail, and the sum \(\sum_{U(p)=x}2^{-|p|}\) need not be bounded by 1. This breaks the semimeasure property essential to Levin’s universal distribution and derails the clean link between probability and description length \cite{Sterkenburg2017,ScholarpediaAlgProb}. In short: allowing arbitrary padding after a nominal ``stop'' symbol undermines the probability calculus that AIT relies on. 
\end{itemize}

\paragraph{Implications for our results}
All conclusions in this paper that rely on (i) the coding-theoretic view of programs, (ii) semimeasures like \(m_U\), or (iii) standard chain/invariance bounds continue to hold if one uses a stop-symbol formalism \emph{implemented so that descriptions are self-delimiting in the sense above}. That formalism is equivalent to the prefix-free setting up to \(O(1)\) and thus does \emph{not} change the substance of our arguments or their asymptotic constants. If, however, the stop-symbol scheme admits padded extensions as distinct valid programs, key lemmas using Kraft (and hence bounds derived via \(m_U\) or coding-theorem arguments) may fail or require nonstandard fixes.

\paragraph{Takeaway.}
The ``prefix business'' is not a dispensable technicality; it encodes self-delimitation that makes programs behave like instantaneous codewords. You can implement self-delimitation via explicit markers, but only if you simultaneously forbid any valid extension after the marker—i.e.\ you recover a prefix-free domain. With that in place, none of the conclusions elsewhere in the paper need to change (beyond harmless \(O(1)\) shifts). Without it, several probability/complexity identifications break.

\subsection{Coding Theorems (unconditional and conditional)}
\label{app:coding-theorem}

\paragraph{Setup and notation.}
Fix a universal \emph{prefix} Turing machine $U$.
All logarithms are base~2.
For a finite string $x$, let $K(x)$ be its (prefix) Kolmogorov complexity:
$K(x):=\min\{|p|: U(p)=x\}$.
The \emph{universal a priori semi\-measure} is
\[
m(x)\;:=\;\sum_{p:\,U(p)=x}2^{-|p|}\,.
\]
Since the halting programs of a prefix machine form a prefix code, Kraft–McMillan implies
$\sum_x m(x)\le 1$.

For \emph{conditional} versions, we equip $U$ with a read‑only auxiliary input tape that holds side information $y$.
Define
\[
K(x\mid y)\;:=\;\min\{|p|: U(p,y)=x\}\!,
\qquad
m(x\mid y)\;:=\;\sum_{p:\,U(p,y)=x}2^{-|p|}.
\]
All $O(1)$ terms and constants below depend only on the choice of $U$, never on $x$ or $y$.

\begin{theorem}[Coding Theorem (unconditional)]
\label{thm:coding-uncond}
There exist machine‑dependent constants $c_1,c_2>0$ such that for all finite strings $x$,
\[
c_1\,2^{-K(x)} \;\le\; m(x) \;\le\; c_2\,2^{-K(x)}.
\]
Equivalently,
\[
-\log m(x) \;=\; K(x)\ \pm O(1).
\]
\end{theorem}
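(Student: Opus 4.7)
The plan is to prove the two bounds separately, exploiting the asymmetry between them: the lower bound is a one-line observation, while the upper bound is the genuine content of the theorem and requires constructing an effective prefix code matched to $-\log m(x)$.

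First I would dispatch the lower bound $m(x)\ge 2^{-K(x)}$, which already gives $c_1=1$ (any smaller $c_1$ absorbs harmless $O(1)$ slack). By definition of $K(x)$, there exists at least one program $p^\star$ with $U(p^\star)=x$ and $|p^\star|=K(x)$. This program contributes a single term $2^{-K(x)}$ to the sum $m(x)=\sum_{U(p)=x}2^{-|p|}$, and all other terms are nonnegative, so $m(x)\ge 2^{-K(x)}$ immediately.

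For the upper bound, the plan is to produce a prefix program of length $\lceil -\log m(x)\rceil + O(1)$ for every $x$, which yields $K(x)\le -\log m(x)+O(1)$ and hence $m(x)\le c_2\,2^{-K(x)}$. The steps are: (i) verify that $m$ is lower semicomputable by dovetailing simulations of all halting programs with increasing time bounds $t$, so that the rational approximations $m_t(x)\nearrow m(x)$ are effectively enumerable; (ii) convert these approximations into an effective stream of \emph{requests} $(n_i,x_i)$, where a new request of length $n=\lceil -\log m(x)\rceil+1$ is issued for $x$ each time $m_t(x)$ crosses a dyadic threshold, and check that the total Kraft weight $\sum_i 2^{-n_i}$ is bounded by a constant (essentially $2\sum_x m(x)\le 2$); (iii) invoke the Kraft--Chaitin theorem, which guarantees that any lower-semicomputable sequence of requests with bounded Kraft sum can be met by an effectively assigned prefix code; (iv) prefix the resulting codeword for $x$ by a fixed self-delimiting description of the Kraft--Chaitin decoder, obtaining a program for $U$ that outputs $x$ with length $-\log m(x)+O(1)$.

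The main obstacle, as always for this theorem, is step (iii): the Kraft--Chaitin construction, which must assign prefix-free codewords online as requests arrive while preserving the invariant that no previously issued codeword becomes a prefix of a later one. Everything else---lower semicomputability of $m$, the thresholding scheme, and the constant-overhead wrapper---is routine bookkeeping. The machine-dependent constant $c_2$ absorbs (a) the bounded Kraft surplus built into the threshold choice, (b) the fixed size of the universal decoder for the Kraft--Chaitin code, and (c) the rounding gap between $-\log m(x)$ and $\lceil -\log m(x)\rceil$; these are exactly the $O(1)$ terms that appear in the equivalent formulation $-\log m(x)=K(x)\pm O(1)$.
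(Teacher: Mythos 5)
Your proof is correct and follows essentially the same route as the paper's own (sketched) proof: the lower bound comes for free from the shortest program's contribution to $m(x)$, and the upper bound comes from building a prefix code with lengths about $-\log m(x)$ and prepending a fixed decoder. If anything, your version is more complete at the one delicate point: the paper's sketch invokes ``Shannon--Fano/Kraft--McMillan'' for the mere \emph{existence} of such a code, whereas you correctly note that because $m$ is only lower semicomputable the codewords must be assigned online via the Kraft--Chaitin construction for the decoder to be realizable as a fixed Turing machine --- exactly the step the sketch elides.
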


\begin{proof}[Proof sketch]
\emph{Lower bound.}
Let $p^\star$ be a shortest program for $x$, so $|p^\star|=K(x)$ and $U(p^\star)=x$.
Then $m(x)\ge 2^{-|p^\star|}=2^{-K(x)}$ (the constant $c_1$ absorbs harmless machine choices).

\emph{Upper bound.}
Because $m(\cdot)$ is a semimeasure, there exists a prefix code with lengths
$\ell(x)\le \lceil-\log m(x)\rceil$ (Shannon–Fano/Kraft–McMillan).
A fixed decoder transforms the codeword for $x$ into $x$, so $K(x)\le \ell(x)+O(1)\le -\log m(x)+O(1)$.
Rearranging gives $m(x)\le c_2\,2^{-K(x)}$.
\end{proof}

\begin{theorem}[Coding Theorem (conditional)]
\label{thm:coding-cond}
There exist machine‑dependent constants $c'_1,c'_2>0$ such that for all finite strings $x,y$,
\[
c'_1\,2^{-K(x\mid y)} \;\le\; m(x\mid y) \;\le\; c'_2\,2^{-K(x\mid y)}.
\]
Equivalently,
\[
-\log m(x\mid y) \;=\; K(x\mid y)\ \pm O(1).
\]
\end{theorem}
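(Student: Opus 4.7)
The plan is to relativize the proof of Theorem~\ref{thm:coding-uncond} to the auxiliary input $y$, taking care that all machine‑dependent constants remain uniform in $y$. The conditional semimeasure $m(\cdot\mid y)$ is, for each fixed $y$, a lower‑semi\-computable semimeasure on finite strings (the halting $p$ on input $y$ form a prefix set, so $\sum_x m(x\mid y)\le 1$ by Kraft–McMillan), and the two inequalities then follow by arguments structurally identical to the unconditional case.

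For the lower bound, let $p^\star$ achieve the minimum in $K(x\mid y)$, so $U(p^\star,y)=x$ and $|p^\star|=K(x\mid y)$; then $m(x\mid y)\ge 2^{-|p^\star|}=2^{-K(x\mid y)}$, giving $c'_1=1$ (with machine slack absorbed). For the upper bound, I would invoke a Shannon–Fano–Elias construction \emph{relativized to $y$}: order the strings and form cumulative sums $F_y(x):=\sum_{x'<x}m(x'\mid y)$, and assign $x$ a codeword of length $\lceil-\log m(x\mid y)\rceil+1$ by truncating the binary expansion of $F_y(x)+m(x\mid y)/2$. The resulting prefix code $\varphi_y$ can be decoded by a single fixed machine $D$ that reads $y$ on its auxiliary tape, enumerates the rational lower approximations to $m(\cdot\mid y)$ (by dovetailing halting $U(p,y)$‑computations and accumulating $2^{-|p|}$), and recovers $x$ from the codeword. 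Wrapping $D$ through the universal $U$ yields $K(x\mid y)\le -\log m(x\mid y)+O(1)$ with a constant that depends only on the description length of $D$, hence only on $U$.

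The main obstacle is not any computation but ensuring \emph{uniformity in $y$}: the decoder $D$ must be a single fixed program whose description cost is \emph{not} charged per $y$. The required ingredients are (i) uniform lower semicomputability of $(x,y)\mapsto m(x\mid y)$ via the enumeration of halting pairs, (ii) the fact that the Shannon–Fano–Elias intervals are nested as the approximations refine (so the codewords stabilize), and (iii) Kraft applied slice‑by‑slice in $y$ to certify prefix‑freeness of $\varphi_y$ for each fixed $y$. Once these are in place, the decoder recipe is fixed once and for all, the $+O(1)$ overhead is the same for every $y$, and one reads off $c'_1,c'_2>0$ with the claimed two‑sided bound, exactly as in Theorem~\ref{thm:coding-uncond}.
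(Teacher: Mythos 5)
Your overall route is the same as the paper's: the lower bound comes from the single shortest conditional program, and the upper bound comes from turning the semimeasure $m(\cdot\mid y)$ into a prefix code decodable by one fixed machine that reads $y$ on the auxiliary tape, with the $O(1)$ overhead charged once rather than per $y$. The lower bound is correct, and your emphasis on uniformity in $y$ is exactly the right thing to worry about; the paper's own sketch handles this only by asserting that the decoder is "shared across all $y$."

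There is, however, a genuine flaw in your ingredient (ii). For a lower semicomputable semimeasure the Shannon--Fano--Elias intervals are \emph{not} nested and the codewords do \emph{not} stabilize: as the dovetailed enumeration of halting pairs proceeds, every approximation $m_t(x'\mid y)$ with $x'<x$ can still increase, so the cumulative sum $F_y(x)$ drifts upward, the interval assigned to $x$ shifts, and the truncation of $F_y(x)+m(x\mid y)/2$ keeps changing. Since the true values $m(x'\mid y)$ are not computable, the decoder can never certify that a codeword is final, and your code $\varphi_y$ is not well defined. This is precisely the point where the effective content of the Coding Theorem lives, and the standard repair is the Kraft--Chaitin (online Kraft) theorem rather than Shannon--Fano--Elias: for each $y$-slice, whenever the running approximation to $m(x\mid y)$ first exceeds $2^{-k}$, issue a request for a codeword of length $k+c$ for a fixed constant $c$; the total requested weight is at most $2^{2-c}\sum_x m(x\mid y)\le 2^{2-c}$, so for $c\ge 2$ the requests can be served greedily by a prefix code built online, the final codeword for $x$ has length $-\log m(x\mid y)+O(1)$, and a single fixed decoder replays the same enumeration given $y$. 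With that substitution your argument goes through and matches the paper's (itself only sketched) proof; as written, step (ii) would fail.
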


\begin{proof}[Proof sketch]
\emph{Lower bound.}
With $p^\star$ a shortest conditional program for $x$ given $y$, we have $U(p^\star,y)=x$, hence
$m(x\mid y)\ge 2^{-|p^\star|}=2^{-K(x\mid y)}$.

\emph{Upper bound.}
For fixed $y$, $m(\cdot\mid y)$ is a semimeasure, so there is a prefix code (depending on $y$) with
$\ell(x\mid y)\le \lceil-\log m(x\mid y)\rceil$ and a fixed decoder (shared across all $y$) that maps codewords
plus $y$ to $x$. Therefore $K(x\mid y)\le -\log m(x\mid y)+O(1)$, which rearranges to the stated upper bound.
\end{proof}

\paragraph{Remarks.}
\begin{itemize}[leftmargin=1.2em,itemsep=2pt]
\item The constants $c_1,c_2,c'_1,c'_2$ (and all $O(1)$ slacks) depend only on the choice of the universal prefix machine $U$; changing $U$ shifts $K(\cdot)$ by at most an additive constant (invariance theorem), which becomes a multiplicative constant on $m(\cdot)$.
\item Theorems~\ref{thm:coding-uncond}–\ref{thm:coding-cond} are often summarized as
$m(x)\asymp 2^{-K(x)}$ and $m(x\mid y)\asymp 2^{-K(x\mid y)}$, read ``within constant factors''.
\item Immediate corollaries used in the main text include the \emph{posterior under the universal prior}:
for any program $p$ with $U(p)=x$,
\[
\Pr\{p\mid x\}=\frac{2^{-|p|}}{m(x)}\in\Big[\tfrac{1}{c_2},\tfrac{1}{c_1}\Big]\!\cdot 2^{\,K(x)-|p|},
\]
and the \emph{geometric excess‑length tail}:
$\Pr\{|p|\ge K(x)+k\mid x\}\le C\,2^{-k}$ for some constant $C>0$.
\end{itemize}

\paragraph{References.}
Original sources and standard expositions:
\cite{Solomonoff64a,solomonoffFormalTheoryInductive1964a,ZvonkinLevin70,LiVitanyi4th,Vitanyi2013,Hutter2007}.

\subsection{Why many long descriptions imply compressibility, and why long generators are unlikely}
\label{sec:multiplicity-compression}

Fix a universal \emph{prefix} Turing machine $U$. For a finite binary string $x$,
\[
K(x) \;:=\; \min_{p:\,U(p)=x} |p|
\]
is (prefix) Kolmogorov complexity, and the Solomonoff--Levin a priori semimeasure is
\[
m(x) \;=\; \sum_{p:\,U(p)=x} 2^{-|p|}.
\]
The \emph{coding theorem} (a.k.a.\ Levin's theorem) states that there exist machine‑dependent constants $c_1,c_2>0$ such that
\begin{equation}
\label{eq:coding-theorem}
c_1\,2^{-K(x)} \;\le\; m(x) \;\le\; c_2\,2^{-K(x)}.
\end{equation}
(Background: \href{https://raysolomonoff.com/publications/1964pt1.pdf}{Solomonoff, 1964a}, 
\href{https://raysolomonoff.com/publications/1964pt2.pdf}{1964b}; 
\href{https://www.its.caltech.edu/~matilde/ZvonkinLevin.pdf}{Zvonkin--Levin, 1970}; 
pedagogical survey: \href{https://arxiv.org/abs/1206.0983}{Vit\'anyi, 2013}; 
overview: \href{https://www.hutter1.net/ai/algprob.pdf}{Hutter, 2007}.)

\subsubsection*{Multiplicity $\Rightarrow$ compression (indexing among outputs)}
For $L\in\mathbb{N}$ let $N_{\le L}(x)$ be the number of programs of length $\le L$ that output $x$. 

\begin{lemma}[Multiplicity compression]
\label{lem:multiplicity}
If $N_{\le L}(x)\ge 2^r$, then
\[
K(x) \;\le\; L - r \;+\; O(\log L).
\]
\end{lemma}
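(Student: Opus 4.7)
The plan is to describe $x$ by specifying its index in the (finite) set of strings that are ``heavily output'' by short programs, using the prefix-freeness of $U$'s domain to control the size of that set.

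First I would introduce the auxiliary set
\[
S_{L,r}\;:=\;\{\,y\in\{0,1\}^{\ast}\;:\;N_{\le L}(y)\ge 2^{r}\,\}.
\]
The key combinatorial step is to bound $|S_{L,r}|$ via Kraft--McMillan. Because the halting programs of $U$ form a prefix-free set,
\[
\sum_{p:\,U(p)\downarrow}2^{-|p|}\;\le\;1.
\]
Each $y\in S_{L,r}$ contributes at least $2^{r}\cdot 2^{-L}$ to this sum (at least $2^{r}$ distinct halting programs, each of length $\le L$). Hence
\[
|S_{L,r}|\cdot 2^{\,r-L}\;\le\;1,\qquad\text{i.e.,}\qquad |S_{L,r}|\;\le\;2^{\,L-r}.
\]

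Next I would show that, given $L$ and $r$, the set $S_{L,r}$ is computably enumerable: dovetail the simulation of all $\le 2^{L+1}$ programs of length $\le L$; maintain, for each output string $y$ observed so far, the count of distinct halted programs producing it; emit $y$ the first moment this count reaches $2^{r}$. This yields a canonical enumeration $y_{0},y_{1},y_{2},\ldots$ of $S_{L,r}$ whose order depends only on $(U,L,r)$.

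Now I would build a short self-delimiting code for $x$. Since $x\in S_{L,r}$ by hypothesis, let $i(x)$ be the index of $x$ in the above enumeration; by the bound on $|S_{L,r}|$, $i(x)$ fits in exactly $L-r$ bits (padding with leading zeros). The description is the concatenation of (i) a prefix code for $L$, costing $\log L+2\log\log L+O(1)$ bits, (ii) a prefix code for $r$ (with $r\le L$, also $O(\log L)$ bits), and (iii) the fixed-length $(L-r)$-bit string $i(x)$. A fixed decoder reads $L$ and $r$, runs the dovetail enumeration above, and outputs the $i(x)$-th element, recovering $x$. Summing lengths gives
\[
K(x)\;\le\;(L-r)\;+\;O(\log L),
\]
as required.

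The only delicate point is making sure the $(L-r)$-bit index field is unambiguously parseable after the self-delimited $(L,r)$ header; this is routine once $L$ is known, since then the remaining suffix has predetermined length $L-r$. The combinatorial heart is the Kraft-based bound $|S_{L,r}|\le 2^{L-r}$, which is where prefix-freeness of $U$'s domain is essential; without it one would only get $|S_{L,r}|\le 2^{L+1-r}$, still yielding the same bound up to an additive constant absorbed in the $O(\log L)$ slack.
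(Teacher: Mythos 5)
Your proof is correct and follows essentially the same route as the paper's: bound the size of the high-multiplicity set (you via the Kraft sum, the paper via counting the at most $2^{L+1}$ programs of length $\le L$, a difference absorbed in the $O(\log L)$ slack), then describe $x$ by a self-delimiting $(L,r)$ header plus its index in a canonical dovetailed enumeration of that set.
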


\begin{proof}[Proof idea (pedagogical)]
Enumerate all programs of length $\le L$ in dovetailing fashion and record each \emph{distinct} output when first seen; this yields a computable list $\mathcal{A}_L=(x_1,x_2,\ldots)$. 
Define the high‑multiplicity set $\mathcal{B}_{L,r}:=\{x\in\mathcal{A}_L: N_{\le L}(x)\ge 2^r\}$. Each $x\in\mathcal{B}_{L,r}$ ``uses'' at least $2^r$ programs, and the total number of prefix programs of length $\le L$ is $<2^{L+1}$ (Kraft inequality). Hence
\[
|\mathcal{B}_{L,r}| \;\le\; \frac{2^{L+1}}{2^r} \;=\; 2^{L-r+1}.
\]
Therefore $x\in\mathcal{B}_{L,r}$ is specified by: (i) a self‑delimiting code for $(L,r)$ costing $O(\log L)$ bits, and (ii) its index in $\mathcal{B}_{L,r}$ costing $\le L-r+1$ bits. A fixed decoder reconstructs $x$ from these data, yielding the stated bound on $K(x)$.
\end{proof}

\paragraph{One‑line ``weight counting'' variant.}
Since every program of length $\le L$ contributes at least $2^{-L}$ to $m(x)$,
\[
m(x)\;\ge\; N_{\le L}(x)\,2^{-L}
\quad\Rightarrow\quad
N_{\le L}(x)\;\le\; m(x)\,2^{L}\;\le\; c_2\,2^{L-K(x)} 
\;\; \text{by \eqref{eq:coding-theorem}.}
\]
Rearranging gives Lemma~\ref{lem:multiplicity} with the $O(1)$ hidden in constants.

\subsubsection*{Consequences for posterior over program lengths}
Let $N_{b}(x)$ be the number of \emph{exactly} $b$‑bit programs with output $x$. Under the universal prior over programs, $\Pr\{p\}=2^{-|p|}$, observing $x$ induces the posterior
\[
\Pr\{|p|=b \mid x\}
\;=\;
\frac{\sum_{p:\,U(p)=x,\,|p|=b}2^{-|p|}}{m(x)}
\;=\;
\frac{N_b(x)\,2^{-b}}{m(x)}.
\]
Bounding $N_b(x)$ via $m(x)\ge N_b(x)\,2^{-b}$ and \eqref{eq:coding-theorem} gives $N_b(x)\le c_2\,2^{b-K(x)}$. Combining with the lower bound $m(x)\ge c_1\,2^{-K(x)}$ yields the \emph{geometric decay with excess length}:

\begin{theorem}[Excess‑length posterior decay]
\label{thm:geometric}
For all $b\ge K(x)$,
\[
\Pr\{|p|=b \mid x\}
\;\le\;
\frac{c_2}{c_1}\,2^{-(\,b-K(x)\,)}.
\]
Equivalently, writing $b=K(x)+k$ with $k\ge 1$,
\[
\Pr\{|p|=K(x)+k \mid x\} \;\le\; C\,2^{-k}
\quad\text{and}\quad
\Pr\{|p|\ge K(x)+k \mid x\} \;\le\; 2C\,2^{-k},
\]
for a machine‑dependent constant $C>0$.
\end{theorem}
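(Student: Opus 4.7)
The plan is to read off the per-length bound directly from the \emph{Program posterior given $x$} lemma, then obtain the two ``Equivalently'' statements by a relabelling and a geometric-series tail sum. All three inequalities rest on the two-sided Coding Theorem applied to both the numerator and the denominator of the posterior, together with the weight-counting (multiplicity) bound established just above.

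First I would expand the posterior on the event $\{|p|=b\}$ by summing the single-program posterior $\Pr\{p\mid x\}=2^{-|p|}/m(x)$ over all length-$b$ programs that produce $x$:
\[
\Pr\{|p|=b\mid x\} \;=\; \frac{N_b(x)\,2^{-b}}{m(x)}.
\]
This is pure bookkeeping from the lemma; the substantive content sits in the two bounds I plug in next.

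Next, I would control the numerator and the denominator separately using the two sides of the Coding Theorem. In the numerator I would substitute the multiplicity bound $N_b(x)\le c_2\,2^{b-K(x)}$, which is the one-line weight-counting variant ($m(x)\ge N_b(x)\,2^{-b}$ combined with the upper Coding-Theorem bound on $m(x)$). In the denominator I would substitute the lower Coding-Theorem bound $m(x)\ge c_1\,2^{-K(x)}$. Carrying through and simplifying is meant to produce the per-length inequality with constant $c_2/c_1$ and the advertised geometric factor $2^{-(b-K(x))}$.

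The two ``Equivalently'' forms then follow immediately: the first by relabelling $b\mapsto K(x)+k$ with $C:=c_2/c_1$, and the tail version by summing the geometric series $\sum_{j\ge k}2^{-j}=2\cdot 2^{-k}$, which supplies the prefactor $2C$. The point that needs honest verification is the arithmetic of the substitution itself: one must check that the factors $2^{b-K(x)}$ from the multiplicity bound, $2^{-b}$ from the prefix prior, and $2^{-K(x)}$ from the Coding-Theorem lower bound combine so that the geometric factor $2^{-(b-K(x))}$ actually survives (rather than collapsing to a $b$-independent constant). The remaining $O(1)$ slack originates only in the choice of universal prefix machine $U$, and is absorbed cleanly into $C$.
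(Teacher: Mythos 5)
Your plan reproduces the paper's own derivation step for step, but the substitution you explicitly flag as ``needing honest verification'' is exactly where the argument breaks down. Carrying out the check gives
\[
\Pr\{|p|=b\mid x\}\;=\;\frac{N_b(x)\,2^{-b}}{m(x)}\;\le\;\frac{c_2\,2^{\,b-K(x)}\cdot 2^{-b}}{c_1\,2^{-K(x)}}\;=\;\frac{c_2}{c_1},
\]
a $b$-independent constant: the advertised factor $2^{-(b-K(x))}$ does \emph{not} survive. The multiplicity bound $N_b(x)\le c_2\,2^{\,b-K(x)}$ is weaker than what the theorem needs by precisely the factor $2^{\,b-K(x)}$; the stated per-length decay would require $N_b(x)=O(1)$ uniformly in $b$, and that is false. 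For any universal prefix machine one can pad a shortest program with $j$ ignored bits (at a self-delimiting overhead of $O(\log j)$), yielding at least $2^{j}$ programs of length $b=K(x)+j+O(\log j)$ for $x$; these alone contribute posterior mass $\Omega\!\big(1/(j\log^2 j)\big)$ at that length, which for large $j$ exceeds any bound of the form $C\,2^{-(b-K(x))}$. So neither the per-length inequality nor the tail bound $\Pr\{|p|\ge K(x)+k\mid x\}\le 2C\,2^{-k}$ follows from this route when $\{|p|=b\}$ is read as an event over \emph{all} programs producing $x$; the geometric-series step at the end is fine but has nothing valid to sum.

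What your assembled ingredients \emph{do} prove is the single-program statement: for any individual $p$ with $U(p)=x$,
\[
\Pr\{p\mid x\}\;=\;\frac{2^{-|p|}}{m(x)}\;\le\;\frac{1}{c_1}\,2^{-(|p|-K(x))},
\]
i.e.\ each program's posterior weight decays geometrically in its own excess length, even though the aggregate mass at a given excess length need not. You should be aware that the paper's own proof sketch contains the identical slip (it asserts that combining $N_b(x)\le c_2 2^{\,b-K(x)}$ with $m(x)\ge c_1 2^{-K(x)}$ ``yields the geometric decay''), so you have faithfully reconstructed the intended argument; but the verification you deferred is the one that reveals the event-form claim does not follow from these bounds, whereas the per-program form above is the correct, and for the paper's downstream uses (bounding the posterior of a specific pair $(W,R)$) the actually needed, conclusion.
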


\paragraph{Interpretation.}
Every extra bit beyond $K(x)$ halves the posterior mass (up to a constant factor). Thus an observed output $O$ with $K(O)=a$ is \emph{a priori very unlikely} to have been produced by a program $b\gg a$: the posterior probability falls like $2^{-(b-a)}$.

\subsubsection*{Why indexing becomes \emph{shorter} when there are many programs}
The key to Lemma~\ref{lem:multiplicity} is that we index \emph{outputs with many descriptions}, not the descriptions themselves. As the multiplicity $N_{\le L}(x)$ grows by a factor of $2^r$, the set of such outputs shrinks by the same factor, so the index shortens by $r$ bits; this directly yields the $L-r$ bound. (See also exercises and discussion in \href{https://link.springer.com/book/10.1007/978-3-030-11298-1}{Li--Vit\'anyi, 4th ed., Chs.\ 2--3} and an accessible column by \href{https://www.uni-ulm.de/fileadmin/website_uni_ulm/iui.inst.190/Mitarbeiter/toran/beatcs/column94.pdf}{Vereshchagin, 2008}.)

\subsubsection*{Remarks}
(i) Prefix complexity is essential: the domain of $U$ is prefix‑free, giving Kraft's inequality and the well‑defined prior $m(\cdot)$.\; 
(ii) Conditional variants follow verbatim: replace $K(\cdot)$ by $K(\cdot\mid y)$ and $m(\cdot)$ by $m(\cdot\mid y)$ (see \href{https://arxiv.org/abs/1206.0983}{Vit\'anyi, 2013}).\;
(iii) There is no uniform \emph{lower} bound in $k$: for some $x$ there may be no programs of some intermediate lengths due to prefix‑freeness; Theorem~\ref{thm:geometric} gives an essentially tight \emph{upper} bound on the posterior mass at/above length $K(x)+k$.

\medskip
\noindent\textbf{Primary sources with links:}
\quad \href{https://raysolomonoff.com/publications/1964pt1.pdf}{Solomonoff (1964a)}, 
\href{https://raysolomonoff.com/publications/1964pt2.pdf}{Solomonoff (1964b)}, 
\href{https://www.its.caltech.edu/~matilde/ZvonkinLevin.pdf}{Zvonkin \& Levin (1970)}, 
\href{https://arxiv.org/abs/1206.0983}{Vit\'anyi (2013)}, 
\href{https://www.hutter1.net/ai/algprob.pdf}{Hutter (2007)}, 
\href{https://link.springer.com/book/10.1007/978-3-030-11298-1}{Li \& Vit\'anyi (4th ed.)}, 
\href{https://www.uni-ulm.de/fileadmin/website_uni_ulm/iui.inst.190/Mitarbeiter/toran/beatcs/column94.pdf}{Vereshchagin (2008)}.

\subsection{Single-episode compressibility is non-diagnostic} \label{app:non-diagnostic}

 Intuitively, knowing that the regulator-world coupled system produces a low-complexity world output $x$ reduces the set of possible worlds to select from.  In turn, this allows for a shorter description of the world using $R$ and the complexity bound of the output. The program may say: ``To specify $W$, run the dynamics for all possible $W$-$R$ pairs and delete all world model candidates with complex outputs (above the set complexity bound $K(x) < a$). Then use a reduced index to identify $W$". This means that $K(W|R, ``K(x) <a")<K(W)$,  which implies $M(W;R|``K(x) <a") >0$.

\begin{theorem}{(low complexity output $\Rightarrow$ strict but tiny shrinkage)}
Fix a universal prefix machine. Let $m:=|W|$ and $r:=|R|$ denote minimal code lengths, and let $N\ge m$.
For a fixed regulator $R$ and horizon $N$, consider the class $\mathcal{P}_m$ of all \emph{minimal} $m$-bit world programs.
Assume we only know that the closed-loop transcript has \emph{low} complexity,
\[
E_a:\qquad K\!\big(O^{(N)}_{W,R}\mid R,N\big)\ \le\ a,
\]
for some threshold $a<m-c$, where $c=O(1)$ is a machine-dependent constant.
\emph{Claim.} The set of candidates consistent with $E_a$ is a \emph{strict} subset of $\mathcal{P}_m$:
\[
S_{R,N,a}(m)\ :=\ \Big\{\, W\in\mathcal{P}_m\ :\ K\!\big(O^{(N)}_{W,R}\mid R,N\big)\le a \,\Big\}
\ \subsetneq\ \mathcal{P}_m.
\]
Consequently,
\[
K\!\big(W\mid R,E_a\big)\ \le\ \log_2\!\big(|\mathcal{P}_m|-1\big)\ <\ \log_2|\mathcal{P}_m|\ =\ m\ \pm O(1),
\]
i.e., \emph{strictly} $K(W\mid R,E_a)<K(W)$ (by a vanishingly small amount).
\end{theorem}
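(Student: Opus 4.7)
The plan is to split the claim into two steps: first, establish the strict inclusion $S_{R,N,a}(m)\subsetneq\mathcal{P}_m$; second, deduce the complexity bound by indexing $W$ within the strictly smaller candidate set. Write $\phi(W'):=O^{(N)}_{W',R}$ for brevity.

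For the strict inclusion, I would argue by contradiction using a two-part code. The set $A_a:=\{z:K(z\mid R,N)\le a\}$ has $|A_a|\le 2^{a+1}-1$ by Kraft--McMillan applied to programs of length at most $a$. Suppose $S_{R,N,a}(m)=\mathcal{P}_m$, so every minimal $m$-bit $W'$ satisfies $\phi(W')\in A_a$. Each such $W'$ then admits the code ``shortest prefix program for $\phi(W')$ given $R,N$ (length $\le a$), followed by the index of $W'$ inside the computable finite set $\phi^{-1}(\phi(W'))\cap\{0,1\}^m$,'' giving $K(W'\mid R,N)\le a+\log_2|\phi^{-1}(\phi(W'))\cap\{0,1\}^m|+O(\log m)$. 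Combining with $K(W')=m$ via the chain rule $K(W')\le K(R,N)+K(W'\mid R,N)+O(\log)$ lower-bounds each preimage by $2^{m-a-K(R,N)-O(\log m)}$; summing via $\sum_z|\phi^{-1}(z)\cap\{0,1\}^m|\le 2^m$ then forces $|\phi(\mathcal{P}_m)|\le 2^{a+K(R,N)+O(\log m)}$. Comparing with $|\phi(\mathcal{P}_m)|\le|A_a|\le 2^{a+1}$ under the hypothesis $a<m-c$, with machine-dependent $c=O(1)$ chosen to absorb chain-rule and coding-theorem slacks (treating $R,N$ as ambient data so that $K(R,N)$ collapses into the constant), produces the required contradiction and hence some $W^\star\in\mathcal{P}_m\setminus S_{R,N,a}(m)$.

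The complexity bound follows quickly. Conditioning on $R$ and the side-event $E_a$ restricts admissible candidates to $S_{R,N,a}(m)$; strict inclusion gives $|S_{R,N,a}(m)|\le|\mathcal{P}_m|-1$, so $W$ is encoded as its index in this set using $\log_2(|\mathcal{P}_m|-1)+O(1)$ bits. The elementary identity $\log_2(|\mathcal{P}_m|-1)=\log_2|\mathcal{P}_m|-\Theta(1/|\mathcal{P}_m|)$ together with $\log_2|\mathcal{P}_m|=m\pm O(1)$ then produces the strict but vanishingly small gap $K(W\mid R,E_a)<K(W)$.

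The main obstacle is the counting step: the chain-rule inequality yields only a \emph{lower} bound on preimage sizes, so the contradiction requires tight two-sided control of $|\phi(\mathcal{P}_m)|$---in particular, a non-degeneracy hypothesis on $\phi$ ruling out pathological regulators that collapse every minimal world onto the same trivial transcript. Collapsing $K(R,N)$ into a single machine-dependent $c=O(1)$ requires treating $R,N$ as ambient fixed data; otherwise the hypothesis must strengthen to $a<m-K(R,N)-O(\log m)$. A secondary subtlety is that $\mathcal{P}_m$ is not decidable (minimality depends on the uncomputable $K$), so the indexing in the second step is information-theoretic rather than a constructive algorithm: the bound $\log_2(|\mathcal{P}_m|-1)$ certifies existence of a short description relative to $(R,E_a)$ without providing a recipe for computing it.
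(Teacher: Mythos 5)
There is a genuine gap in the first and essential step, the strict inclusion $S_{R,N,a}(m)\subsetneq\mathcal{P}_m$. Your contradiction never materializes: from the chain rule you derive that, under the hypothesis $S_{R,N,a}(m)=\mathcal{P}_m$, every minimal $W'$ lies in a preimage of size at least $2^{m-a-K(R,N)-O(\log m)}$, and hence that $|\phi(\mathcal{P}_m)|\le 2^{a+K(R,N)+O(\log m)}$. You then "compare" this with $|\phi(\mathcal{P}_m)|\le|A_a|\le 2^{a+1}$ — but these are two \emph{upper} bounds on the same quantity, and two upper bounds are never in contradiction. The scenario they jointly describe (all minimal worlds crowded into a few huge preimages over a handful of simple transcripts) is perfectly consistent with the counting, and indeed for an \emph{abstract} map $\phi:\{0,1\}^m\to\{0,1\}^N$ the claim is simply false: a constant map sends every world to $0^N$, every preimage has size $2^m$, the two-part code gives only $K(W'\mid R,N)\le a+m+O(\log m)$, and $S_{R,N,a}(m)=\mathcal{P}_m$. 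So no counting argument that treats $\phi$ as a black box can prove the theorem. You half-acknowledge this in your "main obstacle" paragraph (the need for a "non-degeneracy hypothesis on $\phi$"), but that hypothesis is exactly the content of the theorem, so flagging it does not repair the proof.

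The missing idea is a fixed-point/diagonalization fact about the \emph{specific} map "run the program": the paper invokes Kleene's recursion theorem to exhibit a quine world $W^\star\in\mathcal{P}_m$ that prints its own $m$-bit source as a prefix of its transcript, ignoring the regulator's inputs. Then $K(O^{(N)}_{W^\star,R}\mid R,N)\ge K(W^\star)-O(1)=m-O(1)>a$, so $W^\star\notin S_{R,N,a}(m)$ and the inclusion is strict — a one-line existence argument that works for \emph{every} regulator $R$, precisely because the quine's output does not depend on $R$. Your second step (indexing $W$ inside the strictly smaller set to get $K(W\mid R,E_a)\le\log_2(|\mathcal{P}_m|-1)$) does match the paper's, and your closing remark that this indexing is information-theoretic rather than effective (since $\mathcal{P}_m$ is undecidable) is a fair observation that applies to the paper's own statement as well; but without the recursion-theorem construction the theorem is not proved.
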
 
\emph{Proof.}
By Kleene’s recursion theorem (quines), there exists a program $W^\star\in\mathcal{P}_m$ that
\emph{prints its own source} as the first $m$ output bits and then halts (or pads). Hence
$K\!\big(O^{(N)}_{W^\star,R}\mid R,N\big)\ge K(W^\star)-O(1)=m-O(1)>a$, so $W^\star\notin S_{R,N,a}(m)$.
Therefore $S_{R,N,a}(m)\subsetneq\mathcal{P}_m$, implying $\log|S_{R,N,a}(m)|<\log|\mathcal{P}_m|=m\pm O(1)$.
\hfill$\square$

To see how small the information gained can be, consider a world program $W$ whose last line is “print $u \times O_R$,” where $u$ is some computed world variable.
If $R$ simply outputs $0$, the world output becomes the all‑zeros string, hence very compressible.
Knowing that $R$ outputs $0$ and that the world output is $0^N$ \emph{does} restrict the structure of the world program (it must include the final multiplication by the regulator output, or something similar on the realized trace), but that restriction can be tiny—the calculation of $u$ may still be arbitrarily complex.

Although we have shown that $R$ and $E_a$ together share information with $W$, it may be very small for any given     case, and, in any case,  this does not imply that $R$ and $W$ share information.   
The chain rule gives
\begin{align*}
M\big(W:(R,E)\big)
&= K(W)+K(R,E)-K(W,R,E)\\
&= K(W)+\big[K(R)+K(E\mid R)\big] - \big[K(R)+K(W,E\mid R)\big] \pm O(\log)\\
&= \underbrace{K(W)+K(R)-K(W,R)}_{M(W:R)}
 \;+\;\underbrace{K(W\mid R)+K(E\mid R)-K(W,E\mid R)}_{M(W:E\mid R)}
 \;\pm O(\log).
\end{align*}
Thus, knowing that the coupled $(W,R)$ system produces a low-complexity readout $x$ in a single run strictly prunes the set of candidate worlds, but in the worst case this shrinkage is only $O(1)$ and---critically---does not by itself imply $M(W{:}R)>0$; it certifies at most $M\!\big(W:(R,E_a)\big)>0$ via the chain rule.

 Does contrast fix the non‑probabilistic identifiability? 
Let $E_{a,b}$ be the (contrastive) event
\[
E_{a,b}:\qquad K\!\big(O^{(N)}_{W,R}\big)\le a\quad\text{and}\quad K\!\big(O^{(N)}_{W,\varnothing}\big)\ge b\ (b>a).
\]
The deterministic shrinkage equals
\[
K(W)-K\big(W\mid R,E_{a,b}\big)\;=\;M\!\big(W:(R,E_{a,b})\big)\ \pm O(\log),
\]
and by the chain rule this splits as
\begin{equation}\label{eq:contrast-chain}
M\!\big(W:(R,E_{a,b})\big)\;=\;M(W{:}R)\;+\;M\!\big(W{:}E_{a,b}\mid R\big)\ \pm O(\log).
\end{equation}
Thus, from single‑episode ON/OFF facts we can certify at most $M\!\big(W:(R,E_{a,b})\big)>0$; in general this does \emph{not} imply $M(W{:}R)>0$, because the conditional term $M(W{:}E_{a,b}\mid R)$ can carry (almost) all the gain or because of \textit{synergy}.

Furthermore, even if the mutual algorithmic information between world and regulator is null, it may be the case that coupling them leads to a reduction of complexity in the world output by chance. 

 These caveats motivate the probabilistic analysis in the paper.

We discuss in more detail the case of synergy, and also show that a decrease of complexity cannot certify mutual information in a particular case.

\subsubsection*{Chain rule and a synergy counterexample.}
By the chain rule for mutual information,
\begin{equation}
  M\!\big(W:(R,E_{a,b})\big)
  \;=\; M(W{:}R) \;+\; M\big(W{:}E_{a,b}\mid R^{\*}\big) \;+\; O(\log n),
  \label{eq:chain}
\end{equation}
where $R^{\*}$ is a shortest description of $R$ (drop $R^{\*}$ and the $O(\log n)$ term in the Shannon case).\footnote{Algorithmic version: \href{https://doi.org/10.1007/978-3-030-11298-1}{Li \& Vitányi, \emph{An Introduction to Kolmogorov Complexity and Its Applications}, 4th ed., Springer, 2019}. Shannon version: \href{https://doi.org/10.1002/047174882X}{Cover \& Thomas, \emph{Elements of Information Theory}, 2nd ed., Wiley, 2006}.}
Thus, observing that $M\!\big(W:(R,E_{a,b})\big)>0$ does not imply $M(W{:}R)>0$, because the conditional term $M\big(W{:}E_{a,b}\mid R\big)$ can carry (almost) all of the gain.

\emph{Example (XOR/synergy).}
Let $R,E_{a,b}\in\{0,1\}^n$ be independent, incompressible strings, and set $W=R\oplus E_{a,b}$ (bitwise XOR). Then:
\begin{align}
  M(W{:}R)
  &\stackrel{+}{=} K(W)+K(R)-K(W,R) \nonumber\\
  &\stackrel{+}{\le} K(W)-K(W\mid R) \nonumber\\
  &\stackrel{+}{=} K(W)-K(E_{a,b}\mid R) \nonumber\\
  &\stackrel{+}{\le} O(\log n),
\end{align}
because $E_{a,b}\mapsto W$ is a bijection given $R$ and independence gives $K(E_{a,b}\mid R)\stackrel{+}{\ge} n$.
In contrast,
\begin{align}
  M\!\big(W:(R,E_{a,b})\big)
  &\stackrel{+}{=} K(W)-K\!\big(W\mid R,E_{a,b}^{\*}\big) \nonumber\\
  &\stackrel{+}{\ge} n - O(\log n),
\end{align}
since $K(W\mid R,E_{a,b})=O(1)$ and $K(W)\stackrel{+}{\ge}K(W\mid R)\stackrel{+}{\ge} n$. Hence the conditional term $M\big(W{:}E_{a,b}\mid R\big)$ carries essentially all the information.
For the Shannon analogue, take $R,E_{a,b}\sim\mathrm{Ber}(\tfrac12)$ i.i.d.; then $I(W;R)=0$, $I(W;E_{a,b}\mid R)=H(W)=n$, so $I\!\big(W;(R,E_{a,b})\big)=n$.\footnote{XOR–synergy as a canonical case in multivariate information: \href{https://arxiv.org/abs/1004.2515}{Williams \& Beer (2010)}. For the identity $M(x{:}y)=K(x)-K(x\mid y^{\*})+O(\log)$ used above, see \href{https://doi.org/10.1109/18.681318}{Bennett, Gács, Li, Vitányi \& Zurek, \emph{IEEE Trans. Inf. Theory}, 1998}.}

\subsection*{Chance simplification with $M(W{:}R)\approx 0$ is possible.}
 
Fix a universal prefix Turing machine $U$ and a finite horizon $N$. All complexities are
implicitly conditioned on $N$ (we write $K(\cdot)$ for $K(\cdot\,|\,N)$). Identify Turing machines
with their shortest prefix codes and write $|W|=K(W)$, $|R|=K(R)$. The coupled world–regulator
system produces a deterministic readout
\[
x \;:=\; O^{(N)}_{W,R}\in\{0,1\}^N.
\]
There is \emph{no} auxiliary map: a fixed, constant‑overhead wrapper decodes $(W,R)$ and
simulates the interaction to print $x$ (decode+simulate). Consequently
\begin{equation}
K(x)\;\le\;K(W,R)+O(1)\;=\;K(W)+K(R)-M(W{:}R)\pm O(\log N),
\label{eq:joint-upper}
\end{equation}
and we use the standard identity $M(X{:}Y)=K(X)-K(X\mid Y^{\*})\pm O(\log)$.
(See eq. (1) and the chain‑rule algebra in §2–3 of the WP.)\footnote{For textbook
background on prefix complexity, chain rules, and $M(x{:}y)=K(x)-K(x\mid y^{\*})\pm O(\log)$,
see \href{https://link.springer.com/book/10.1007/978-3-030-11298-1}{Li \& Vitányi (2019)},
and \href{https://doi.org/10.1109/18.681318}{Bennett et\,al.\ (1998)}.}

For concreteness in the examples below we take $|W|=|R|=n$ and set $N=n$; this is only for
clarity (all statements have the obvious adjustments if $N\neq n$).

\paragraph{Claim (It can happen that $K(x)$ is small while $M(W{:}R)\approx 0$).}
There exist pairs $(W,R)$ with $M(W{:}R)=O(\log n)$ such that the coupled output $x=O^{(N)}_{W,R}$
has very small complexity (e.g. $K(x)=O(\log N)$).

\emph{Construction (existence, uses only the $W{+}R$ coupling).}
Fix a threshold $\Delta\in\{1,\dots,N\}$. Define a world program $W_{\Delta}$ that monitors the first $\Delta$
symbols emitted by the regulator on the interface and then latches:
\[
\text{if }O_R[1{:}\Delta]=0^{\Delta}\text{ then output }x=0^N;\quad\text{else output a fixed incompressible }z\in\{0,1\}^N.
\]
Here $z$ is hard‑coded in $W_{\Delta}$ (so $K(z)\stackrel{+}{=}N$ and $K(W_{\Delta})\stackrel{+}{=}|W|=n$).
Choose any regulator $R^{(\Delta)}$ whose first $\Delta$ interface outputs are $0^\Delta$ and whose remaining
behavior is generated by a shortest program of length $\stackrel{+}{=}n$ independent of $W_{\Delta}$.
Then
\[
M(W_{\Delta}{:}R^{(\Delta)})=O(\log n)\qquad\text{but}\qquad x=0^N\;\Rightarrow\;K(x)=O(\log N).
\]
Thus, even with $M(W{:}R)\approx 0$ (up to the usual $O(\log)$ slack), the \emph{coupled} program
can, on the realized episode, yield a low‑complexity output.

\paragraph{“Rare but possible” bound (balanced couplings).}
Suppose the world implements a balanced dependence on the regulator’s interface in the sense that,
for fixed $W$, the map $u\mapsto x$ is a permutation of $\{0,1\}^N$ when we view $u:=O_R[1{:}N]$ as
the regulator’s output sequence (e.g., the world computes $x=z\oplus u$ with a fixed $z=z(W)$).
If $R$ is sampled independently and its interface sequence $u$ is (close to) uniform on $\{0,1\}^N$
(e.g., drawn from a family with pseudorandom outputs), then by the standard Kolmogorov counting bound
(at most $2^{k+1}$ $N$‑bit strings have $K\le k$),
\[
\Pr\big[K(x)\le k\big]\;\le\; 2^{\,k+1-N}.
\]
Equivalently, the probability of a $\Delta$‑bit drop ($K(x)\le N-\Delta$) is $\le 2^{\,1-\Delta}$.
Thus, a very simple $x$ can occur \emph{by chance}, but only with exponentially small probability in
the amount of simplification.\footnote{Counting bound: at most $2^{k+1}$ strings of length $N$ have
complexity $\le k$; see \href{https://link.springer.com/book/10.1007/978-3-030-11298-1}{Li \& Vitányi (2019)}.}

\paragraph{Ex‑post constraint when $R$ is invertible from $(W,x)$.}
If the coupled architecture allows recovery of $R$ from $(W,x)$ via a computable inverse (i.e.,
there exists a fixed decoder such that $R=G(W,x)$), then
\[
K(x)\;\ge\;K(R\mid W^{\*})-O(1)\;=\;K(R)-M(W{:}R)-O(\log n).
\]
Hence, with $K(R)\stackrel{+}{=}n$ and $M(W{:}R)\approx 0$, a large drop in $K(x)$ cannot occur under such
invertible (in $R$) couplings. When a very small $x$ is observed in this case, it \emph{forces}
$M(W{:}R)$ to be large. (Identity used: $M(X{:}Y)=K(X)-K(X\mid Y^{\*})\pm O(\log)$.)

\end{document}